\newcommand{\R}{{\mathbb R}}
\newcommand{\y}{{\bf y}}
\newcommand{\x}{{\bf x}}
\newcommand{\A}{{\bf A}}
\newcommand{\J}{{\bf J}}
\newtheorem{theorem}{Theorem}[section]
\newtheorem{corollary}[theorem]{Corollary}
\newtheorem{lemma}[theorem]{Lemma}
\newtheorem{assumption}[theorem]{Assumption}
\theoremstyle{definition}
\numberwithin{equation}{section}
\begin{document}

\noindent 
\begin{center}
\textbf{\large On the transfer matrix of a MIMO system}
\end{center}

\begin{center}
December 15, 2010
\end{center}

\vspace{0.5cm}

\noindent 

\begin{center}
\textbf{ 
Fran{\c c}ois Bentosela\footnote{Centre de Physique Th{\' e}orique,  
    Campus de Luminy, Case 907 - 13288 Marseille cedex 9, France}$^{,2}$, 
Horia D. Cornean\footnote{Department of Mathematical Sciences, 
    Aalborg
    University, Fredrik Bajers Vej 7G, 9220 Aalborg, Denmark; email: cornean@math.aau.dk},
Bernard H. Fleury\footnote{Department of Electronic Systems, 
    Aalborg
    University, Fredrik Bajers Vej 7A, 9220 Aalborg, Denmark},
Nicola Marchetti\footnote{Department of Electronic Systems, 
    Aalborg
    University, Fredrik Bajers Vej 7A, 9220 Aalborg, Denmark}
}
     
\end{center}

\noindent

\begin{abstract}
We develop a deterministic ab-initio model for the
input-output relationship of
a multiple-input multiple-output (MIMO) wireless channel, starting
from the Maxwell equations combined with Ohm's Law. 
The main technical tools are  scattering and geometric 
perturbation theories. 
The derived relationship can lead us to a deep understanding of how 
the propagation conditions and the coupling effects between 
the elements of multiple-element arrays affect the properties 
of a MIMO channel, e.g. its capacity and its number of degrees of freedom. 
\end{abstract}

\tableofcontents

\vspace{0.5cm}

\section{Introduction}

In wireless communication, systems with multiple antenna arrays at both ends of the link are called MIMO (multiple input multiple output). The main interest is to compute the number of bits/s which can be transmitted  by these systems. It has been shown that this quantity, called  channel capacity \cite{Fo,Te,P,M,Al}, is linked with the transfer matrix $\mathcal{H}$ whose element $\mathcal{H}_{mn}$ is the ratio between the complex current intensity $I_m$ at the load of the $m^{th}$ receiving (RX) antenna over the current $I_n$ which feeds the $n^{th}$ transmitting (TX) antenna.
If $M$ and $N$ are respectively the number of   $RX$ antennas and  $TX$ antennas,  then $\mathcal{H}$ is a $M\times N$ matrix. 

\noindent Even if one can experimentally determine $\mathcal{H}$  in a given environment when the antennas are fixed, its entries show such large variations when moving the antennas arrays from place to place
that the knowledge of $\mathcal{H}$ for some number of locations seems to be insufficient for having an idea of the capacity everywhere. Fortunately, it seems that the eigenvalue distribution of the matrix $\mathcal{H}^*\mathcal{H}$ (here ${}^*$ means adjoint) is
more robust with respect to  the displacements, but this has to be better understood.
In order to mimic the variations of the entries of $\mathcal{H}$, one possibility is to replace $\mathcal{H}$ by random matrices introducing several distributions for their  elements (see \cite{H,Z}).
In sharp contrast with this method, we try to understand the variations through the physical study of the wave propagation in a given environment. Thus we employ a deterministic modeling method and our aim is to describe $\mathcal{H} $ by quantities linked with the physical characteristics of the antennas and the environment. But certain stochastic aspects can naturally appear if some parameters of the scatterers present in the medium are described by stochastic variables (like position, density etc).

\noindent This work was motivated by the existence of an heuristic formula intensively used by the MIMO signal experts. It appeared in the works of B.Fleury et al. see e.g. \cite{F1,F2,F3,F4}; see also \cite{S}. This formula describes the transfer matrix in the following way:
\begin{align}\label{februar13}
\mathcal{H}= \int_{S^2\times S^2}
 c(\Omega_R)^t 
 \mathcal{A}(\Omega_R,\Omega_T)
c(\Omega_T)
 d\Omega_R d\Omega_T .
\end{align}
where 
$\Omega_T=(\phi_T,\theta_T )$ is a direction of departure from  the $TX$ side where $\phi_T,\theta_T$ are respectively its azimuth and elevation 
angles and $\Omega_R=(\phi_R,\theta_R)$ is a direction of arrival at  the $RX$ side,
$c(\Omega_T)$ and $c(\Omega_R)$ are  respectively 
$2\times N$ and $2\times M$ matrices whose elements depend only on the corresponding antennas diagrams and the geometry of the arrays, while $\mathcal{A}(\Omega_R, \Omega_T)$ is a $2\times 2$
matrix which depends only on the environment.

\noindent Our aim is to understand this formula from first principles, that is  solving  the Maxwell equations in a typical  complex environment formed by the  $TX$  and $RX$ antenna arrays surrounded by buildings , persons, trees ...
 The antennas are pieces of metal described 
by their geometry and by their conductivity $\sigma$, while the scatterers which are made of dielectric and magnetic materials can also be described by their geometry, permittivity,
permeability and conductivity.

\noindent To determine $\mathcal{H}_{mn}$ one introduces  
in the second member of the Maxwell equations a harmonic current density $\tilde{\J}_{in}$, in a small  volume constituted  
by a part of the wire which connects the $n^{th}$ signal source to the  antenna. The current intensity crossing a section $S_T$  of the wire  is $I_n=\int_{S_T}\tilde{\J}_{in}(x)\cdot d{\bf S}_T$.
If one knows the electric fields ${\bf E}$  induced by this current density, in particular at a section $S_R$ 
of the wire connecting the $m^{th}$ RX antenna to its load , there 
one can calculate the current density using the Ohm law  
$\tilde{\J}_{out}= \sigma {\bf E}$ and deduce the
current crossing this section,   
$I_m=\sigma\int_{S_R}{\bf E}(x). d{\bf S}_R$. Then $\mathcal{H}_{mn}=\frac{I_m}{I_n}$; note that this formula is 
true when only the $n$'th emitting antenna is fed with current. For the general case see the discussion around 
\eqref{februar6}. 

\noindent To tackle these calculations, wave propagation experts 
try to address the difficulties in the following manner. They suppose that they know the electromagnetic fields (EM) fields radiated by  the $n^{th}$ $TX$ antenna in the free space. These fields are considered as incoming fields in a scattering problem where the scatterers are constituted by buildings, persons, trees... and they try to calculate the total fields produced adding the incoming fields and the scattered fields. This is in fact a difficult problem which is only solved analytically in some very simple and unrealistic situations. Practically it is solved using the geometrical optics approximation 
(ray tracing). Finally, the total $EM$ fields calculated in this way are considered
as incoming fields for the $m^{th}$ $RX$ antenna and, once more, one is faced with a scattering problem where now the $RX$ antennas play the role of scatterers and again one has to determine the new total fields, from which one obtains $I_m$ .

\noindent One of the purposes of the paper is to justify this procedure.
We succeed in proving that when the antennas 
are sufficiently small and spatially separated from the surrounding scatterers, the $EM$ fields calculated with the procedure explained above appear as the 
first term in a series giving the exact fields. In Lemmas \ref{lemma1} and \ref{lemma2}  we give an approximation for 
the total Green function which puts into evidence the decoupling between the $TX$ antennas, the $RX$ antennas and the surrounding scatterers.

\noindent Our proposed methodology is to use the vector 
potential formalism to
solve the Maxwell equations.
Our main mathematical tool is geometric 
perturbation theory for closed but non-selfadjoint
operators. These methods have been developed for apparently unrelated 
quantum scattering problems in \cite{He, B, C1, C2} and they 
can be used for describing the propagation of electromagnetic waves in frequency domain.

\noindent In Corollary \ref{corollary1} we derive a stronger version of the 
 Fleury heuristic formula as a corollary of Theorem \ref{theorem1}, in the case when the antenna arrays are far away from the surrounding scatterers. In particular, we link the heuristic "spread matrix function" 
$\mathcal{A}(\Omega_R,\Omega_T)$ to the Green function of the environment alone.

\noindent In this  paper  the language is deliberately made free of too technical notions of functional analysis and operator theory in order to facilitate the understanding for non-mathematicians. Some notions concerning the Limiting Absorbtion Principle for  the non self adjoint operators coming from the vector potential description of the $EM$ fields are not yet proved, so this paper  has to be completed in this respect in order to get rigorously the set of frequencies for which the Green operators exist when 
the "energy" tends to a real number, as operators in certain weighted $L^{2}$ spaces.

\noindent On the other hand, this paper is a starting point to more applied engineering works  since now we understand 
the link between the heuristic "spread function" in formula (\ref{februar13}) and the
scattering transfer operator of the environment.

\subsection{Main technical assumptions}

Throughout this paper the configuration space will be
$\mathbb{R}^3$. The environment is known, which means that we assume
the knowledge of the dielectric constant $\epsilon$ and the
conductivity $\sigma$.  They are smooth functions of the position and
frequency. There are no magnetic effects, that is $\mu=\mu_0$ is constant. 

\noindent Let us assume that we know the current density 
$\tilde{\J}_{in}(\x,t)$ (i.e. in time domain) 
at the entrance of the transmitting antenna. 
It is completely determined by the transmission device. 
\begin{assumption}\label{Asump1}
The current density $\tilde{\J}_{in}$ has the following generic properties:
\begin{itemize}
 \item it is smooth in the time variable; 
\item it is compactly supported in $\x$;
\item $\frac{1}{\sqrt{2\pi}}\int_{\R}\tilde{\J}_{in}(\x,t)e^{-j\omega
    t}dt=:\J(\x,\omega)=
\overline{\J(\x,-\omega)}$ is a smooth function, compactly supported
in $\omega\in [-\omega_2,-\omega_1]\cup [\omega_1,\omega_2]$. This
equality is a consequence of the reality of $\tilde{\J}_{in}(\x,t)$. 
\end{itemize}
\end{assumption}

\noindent There is an inherent physical imprecision on the very notion of "external"
source current, since this quantity is supposed to be completely
determined by the transmitter.  It is common to assume that the current densities on
the wires feeding the transmitting antennas  are constant on the disc of the corresponding 
cross sections. But this fact does not affect the conclusions of our
paper.

\begin{assumption}\label{Asump2}
Further technical assumptions on the environment:
\begin{itemize}
 \item $\epsilon(\x,\omega)=\overline{\epsilon(\x,-\omega)}$ and 
$\sigma(\x,\omega)=\overline{\sigma(\x,-\omega)}$ are smooth in $\x$. 
The case with piecewise constant $\epsilon$ requires a rather
complicated regularization procedure which will be considered elsewhere;

\item the dielectric constant of the air is constant and
  equals $\epsilon_0$;
\item 
\begin{align}\label{prima1}
\epsilon^{(r)}(\x):= \frac{\epsilon(\x)}{\epsilon_0}=1+\delta\epsilon_T(\x)+
\delta\epsilon_M(\x) +\delta\epsilon_R(\x),
\end{align}
where all the relative $\delta\epsilon$'s are compactly supported
perturbations, with disjoint supports. Here $T$ means the 
transmitting antenna(s), 
$R$ the receiver(s), and $M$ the scatterer(s).
\item the conductivity can be written
\begin{align}\label{adoua1}
 \sigma(\x)=\sigma_T(\x)+\sigma_R(\x),
\end{align}
where the $\sigma$'s are compactly supported on the regions containing
the antennas. 
\end{itemize}
\end{assumption}

\section{The Maxwell equations in the frequency domain}

In the frequency domain, 
the electric and magnetic fields ${\bf E}$ and 
${\bf H}$ satisfy the
following system of equations:
\begin{align}\label{aunspea2}
 \nabla\cdot {\bf H}&=0;\\
\nabla\times {\bf H}&={\bf J}+(j\epsilon\omega+
\sigma){\bf E}\label{aunspea3};\\
\nabla\times {\bf E}&=-j\mu_0\omega {\bf H}\label{aunspea4};\\
\rho&=\nabla\cdot(\epsilon {\bf E}); \label{februar1} \\
-j\omega \rho&=\nabla\cdot ({\bf J}+\sigma {\bf E}).\label{februar2}
\end{align}
Here we incorporate Ohm's Law in \eqref{aunspea3}, 
in the sense that an electric field
${\bf E}$ generates a charge-current density $\sigma{\bf E}$ in 
metals. We stress that we do not work with perfect conducting antennas;
taking $\sigma$ to infinity is a singular operation within our
formalism, and the study of this limit is a very interesting
mathematical problem which is left as an open problem. We note that
one can start with infinite conductivities right from the beginning,
but the price is that one needs to deal with severely ill-posed
inverse problems (see e.g. \cite{O, AKL}). 
 
\subsection{The magnetic vector potential method}

Equations \eqref{februar1} and \eqref{februar2} allow us to eliminate 
the unknown charge density $\rho$: 
\begin{equation}\label{aunspea1}
 \nabla\cdot\{{\bf J}+(j\epsilon\omega+\sigma){\bf E}\}=0.
\end{equation}
Equation \eqref{aunspea2} allows us to represent 
${\bf H}=\nabla \times {\bf A}$, where ${\bf A}$ is a 
magnetic vector potential. Using this in \eqref{aunspea4} gives 
$\nabla\times ({\bf E}+j\mu_0\omega {\bf A})=0$, thus there 
exists a scalar potential $\phi$ such that:
\begin{equation}\label{aunspea5}
{\bf E}+j\mu_0\omega {\bf A}=-\nabla \phi.
\end{equation}
Introducing the magnetic vector potential in \eqref{aunspea3} and
using \eqref{aunspea5} we obtain: 
\begin{equation}\label{aunspea6}
\nabla (\nabla \cdot {\bf A})-\Delta {\bf A}={\bf J}+
(j\epsilon\omega+\sigma)(-j\mu_0\omega {\bf A}-\nabla \phi).
\end{equation}
Note the important thing that if we have a pair $({\bf A},\phi)$ which
solves \eqref{aunspea6}, then \eqref{aunspea1} is automatically
satisfied because the left hand side of \eqref{aunspea6} is divergence
free. Thus we have the freedom of choosing a $\phi$ completely determined by ${\bf
  A}$ through a Lorenz-type gauge condition, and then solve \eqref{aunspea6} for ${\bf A}$. 
Denoting by ${\bf A}_L$ such a special solution, we choose:
\begin{equation}\label{aunspea7}
\phi_{L}:=-\frac{1}{j\epsilon\omega+\sigma}(\nabla \cdot {\bf A}_L).
\end{equation}
With this choice, \eqref{aunspea6} simplifies to:
\begin{equation}\label{aunspea8}
-\Delta {\bf A}_L={\bf J}-j\mu_0\omega
(j\epsilon\omega+\sigma){\bf
  A}_L-(\nabla \cdot {\bf A}_L)\frac{\nabla(j\epsilon\omega+\sigma)}
{(j\epsilon\omega+\sigma)}.
\end{equation}
Introducing the notation
\begin{equation}\label{aunspea9}
k^2(\x):=\mu_0
\epsilon(\x)\omega^2-j\mu_0\omega \sigma(\x),
\end{equation}
we can rewrite \eqref{aunspea8} as:
\begin{equation}\label{aunspea10}
-\Delta {\bf A}_L-k^2{\bf A}_L+(\nabla \cdot {\bf A}_L)\nabla
\ln(k^2)={\bf J}.
\end{equation}
Now if we have a solution to \eqref{aunspea10}, then the fields ${\bf
  E}$ and ${\bf H}$ are given by:
\begin{align}\label{aunspea11}
{\bf E}&=-j\mu_0\omega {\bf A}_L+\nabla \left (\frac{\nabla \cdot {\bf
      A}_L}{j\epsilon\omega+\sigma}\right ),\\
{\bf H}&=\nabla\times {\bf A}_L\nonumber .
\end{align}

\noindent Denoting by $k_0^2:=\mu_0 \epsilon_0\omega^2$ and using 
$\epsilon^{(r)}(\x)=\epsilon(\x)/\epsilon_0$ (see \eqref{prima1}) we have: 
\begin{equation}\label{aunspea12}
k^2(\x)=k_0^2+(\epsilon^{(r)}(\x)-1)k_0^2-j\mu_0\omega
\sigma(\x)=:k_0^2+\delta k^2(\x),
\end{equation}
where $\delta k^2(\x)=(\epsilon^{(r)}(\x)-1)k_0^2-j\mu_0\omega
\sigma(\x)$ is smooth and compactly supported. Note that
$k^2$ cannot be zero if $\epsilon^{(r)}\neq 0$. 

Define the following first order differential operator $W$ in the
following way:
 \begin{equation}\label{aunspea13}
W{\bf A}:=-\delta k^2 {\bf A} +\{\nabla \cdot {\bf A}\} \nabla \ln(k^2) 
\end{equation}
and observe that it is relatively compact with respect to $-\Delta$ on
the Hilbert space $[L^2(\R^3)]^3$. Moreover, the 
operator sum $-\Delta +W$ is closed on the Sobolev space $[H^2(\R^3)]^3$,
and if $z\in \mathbb{C}$ then we can define the inverse $(-\Delta
+W-z)^{-1}$ for at least when $|Im(z)|$ is large enough. 

\subsection{The transfer matrix}
The solution we are looking for in \eqref{aunspea10} is:
 \begin{equation}\label{aunspea14}
{\bf A}_L=(-\Delta
+W-k_0^2-j0_+)^{-1} {\bf J}_{\rm in},
\end{equation}
where we have to give a rigorous sense to the limiting absorption
principle. The problem is far from being trivial because the perturbation $W$ is
not symmetric; technical details and full proofs will be given
elsewhere. We only want to mention that under rather general
conditions, the limit in \eqref{aunspea10} is expected to hold outside
a discrete set of frequencies. In order to simplify the notation, we
stop writing $0_+$.  

\noindent Having determined ${\bf A}_L$, we can derive ${\bf E}$ from
\eqref{aunspea11}: 

\begin{equation}\label{februar5}
{\bf E}=-j\mu_0\omega(-\Delta
+W-k_0^2)^{-1} {\bf J}_{\rm in}+\nabla \left (\frac{1}{j\epsilon\omega+\sigma}\nabla \cdot(-\Delta
+W-k_0^2)^{-1} {\bf J}_{\rm in} \right ).
\end{equation}

\noindent Let us be more precise and assume that we have $N\geq 1$ transmitting 
antennas for which we can write:
\begin{equation}\label{februar3}
{\bf J}_{\rm in}=\sum_{n=1}^N I_{\rm in}^{(n)} {\bf J}_{\rm in}^{(n)},
\end{equation} 
where $I_{\rm in}^{(n)} (\omega)$ is the current running through the tranverse
section of the $n$-th antenna, and ${\bf J}_{\rm in}^{(n)}$ is its 
corresponding normalized current density (i.e. the flux integral of
any ${\bf J}_{\rm in}^{(n)}$ through the transverse section of antenna
$n$ equals one). 

\noindent According to \eqref{adoua1}, the receiving region has a conductivity
$\sigma_R$. Let us assume that there are $M\geq 1$ disjoint receiving antennas
so that we can write:
\begin{equation}\label{februar4}
\sigma_R=\sum_{m=1}^M \sigma_R^{(m)}.
\end{equation} 
Via Ohm's Law ${\bf J}_{\rm out}=\sigma_R
{\bf E} $ we determine the current
density at the $m$-th receiving antenna. Assume that $S_m$ is the
transverse section of the $m$-th receiving antenna. Then the current
induced in it will be:

\begin{align}\label{februar6}
I_{\rm out}^{(m)}&=\int_{S_m} \sigma_R^{(m)}(\x)
{\bf E}(\x)\cdot d{\bf S}=\sum_{n=1}^N\mathcal{H}_{mn}I_{\rm in}^{(n)},\nonumber
\\
\mathcal{H}_{mn}=&\int_{S_m} \sigma_R^{(m)}(\x)\left \{ -j\mu_0\omega(-\Delta
+W-k_0^2)^{-1} {\bf J}_{\rm in}^{(n)}\frac{}{} \right . \nonumber \\ 
&\left . \frac{}{}+\nabla \left
  (\frac{1}{j\epsilon\omega+
\sigma_R^{(m)}}\nabla \cdot(-\Delta
+W-k_0^2)^{-1} {\bf J}_{\rm in}^{(n)} \right )
\right\}\cdot d{\bf S}.
\end{align}

\noindent The transfer matrix elements $\mathcal{H}_{mn}$ are only frequency dependent
and give the current induced in the $m$-th receiving antenna when only
the $n$-th transmitting antenna is fed with current. One must note
that the above formula takes into consideration all possible
couplings. In the rest of the paper we will try to reduce the
complexity of this formula and to arrive to a simpler and practically
more convenient expression.

\section{Decoupling the receivers and transmitters from the environment}

It is important to realize that the resolvent $(-\Delta
+W-k_0^2)^{-1}$
contains the whole information about both the electric and magnetic
fields. 
We will express this resolvent and thus $\mathcal{H}_{mn}$ in a different way, which 
shows a decoupling between transmitters, receivers and scatterers.  

\noindent Let us introduce some notation which would describe a situation 
in which only the transmitter would be present:
\begin{equation}\label{aunspea15}
k_T^2(\x)=k_0^2+\delta\epsilon_T(\x)k_0^2-j\mu_0\omega
\sigma_T(\x)=:k_0^2+\delta k_T^2(\x),
\end{equation}
where $\delta k_T^2$ is again smooth and supported only near the
transmitter(s). In a similar way we introduce the corresponding
quantities for $M$ and $R$. 

\noindent The perturbation corresponding only to the transmitter(s) $W_T$ is:
 \begin{equation}\label{aunspea16}
W_T{\bf A}:=-\delta k_T^2 {\bf A} +\{\nabla \ln(k_T^2)\} \{\nabla \cdot {\bf A}\} 
\end{equation}
and similar objects can be defined for $R$ and $M$. 

\noindent Let us introduce the following operators:
\begin{equation}\label{adoua10}
H=-\Delta +W_T+W_M+W_R,\quad 
H_T:=-\Delta+W_T, \quad H_M :=-\Delta+W_M, \quad H_R:=-\Delta+W_R.
\end{equation}
Here $H_T$ only takes into consideration the perturbation induced by
the transmitter(s), 
$H_M$ does the same thing for the environment, and $H_R$ for 
the receiver(s). Assume that both the transmitter(s) and receiver(s) are
separated from all other scatterers, and from each other (see figure 1
for what follows). 

\begin{center}
{\rm Figure 1.}

            \includegraphics[scale=0.5]{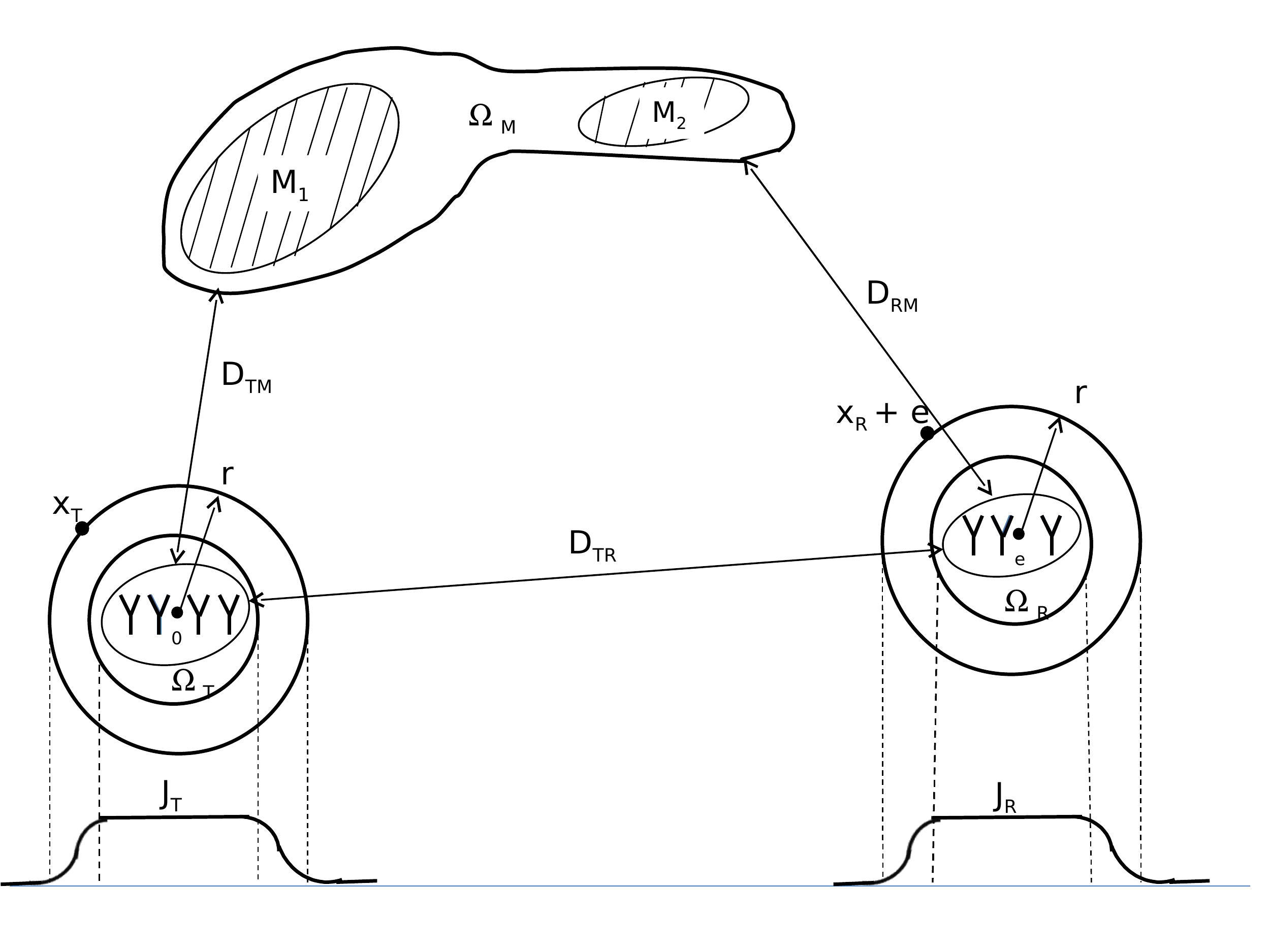} 

\end{center}

\noindent If
$\Omega_{T(RM)}$ are bounded open domains completely containing 
the supports of $W_{T(RM)}$, then we quantify
this separation by:
\begin{align}\label{atreia10}
 &{\rm dist}[\Omega_T,\Omega_R]=D_{TR}>0, \quad 
\min \left\{{\rm dist}[\Omega_R,\Omega_M],
{\rm dist}[\Omega_T,\Omega_M]\right \}= D_M >0,\\
&r:=1+\max\{{\rm diam}(\Omega_T),{\rm diam}(\Omega_R)\},\quad r\leq \min\{D_M/2,D_{TR}/2\}.\nonumber
\end{align}
Remember that we are interested in the study of the operator $(H-z)^{-1}$ for
$z=k_0^2+j0_+$. 

\noindent Denote by
$\chi_{T}$, $\chi_R$ and $\chi_M$ the
characteristic functions of $\Omega_{T}$, $\Omega_R$ and $\Omega_M$ respectively.
Define a smooth function $0\leq J_M\leq 1$ which enters in $\Omega_T$ and $\Omega_R$ and "touches" neither the 
transmitter(s) nor the receiver(s):
\begin{align}\label{apatra10}
 & J_M(\x)=1, \quad \x\not\in \Omega_T\cup\Omega_R,\\
\label{acincea10}&J_M\; \chi_{{\rm supp}(W_T)}=J_M\;\chi_{{\rm supp}(W_R)}=0.
\end{align}
The second condition is possible because the $\Omega$'s completely
contain the two antenna systems. Let us also note the identity:
\begin{align}\label{anoua10}
 J_M(\x)\{1-\chi_T(\x)-\chi_R(\x)\}=1-\chi_T(\x)-\chi_R(\x),\quad \x\in\R^3.
\end{align}

\noindent Since we assumed $r\geq 1$, we can choose a smooth function 
$0\leq J\leq 1$ such that 
\begin{align}\label{asasea10}
 J(\x)=1 \quad {\rm if}\quad |\x|\leq r-1,\qquad J(\x)=0 \quad {\rm if}\quad |\x|\geq r.
\end{align}
Take a point in $\Omega_T$ as the origin of coordinates and let ${\bf e}\in \Omega_R$. Define 
\begin{align}\label{asaptea10}
 J_{T}(\x):=J(\x),\quad J_{R}(\x):=J(\x-{\bf e}).
\end{align}
Our choice of $r$ in \eqref{atreia10} insures that $r-1$ is larger
than the diameters of both $\Omega_T$ and $\Omega_R$. Thus:
\begin{align}\label{aopta10}
 J_{T}\chi_T=\chi_T,\quad J_{R}\chi_R=\chi_R.
\end{align}
The support of any derivative of $J_{T}$ is contained in the annulus 
$r-1\leq |\x|\leq r$, while the support of any derivative of $J_{R}$ is contained in the annulus $r-1\leq |\x-{\bf e}|\leq r$. 

\noindent Let us denote by $\chi_{T_a}$ the characteristic function of the
spherical annulus $r-1\leq |\x|\leq r$, and by $\chi_{R_a}$ the
characteristic function of the spherical annulus $r-1\leq
|\x-{\bf e}|\leq r$. Then we clearly have the identities:
\begin{align}\label{aunspea100}
 \chi_{T_a}(x)D^\alpha J_{T}(\x)=D^\alpha J_{T}(\x),\quad
 \chi_{R_a}(x)D^\alpha J_{R}(\x)=D^\alpha J_{R}(\x).
\end{align}
The supports of $\chi_{T_a}$ and 
$\chi_{R_a}$ are disjoint from each other, and are situated in the
free space.  

\noindent If $z$ has a sufficiently large imaginary part, we can define the
following bounded operator (see \eqref{adoua10}):
\begin{align}\label{adoispea10}
S(z):=J_{T}(H_T-z)^{-1}\chi_T+J_M(H_M-z)^{-1}(1-\chi_T-\chi_R)+J_{R}(H_R-z)^{-1}\chi_R.
\end{align}
We then have:
\begin{align}\label{atreispea10}
 (H-z)S(z)=1+K_T(z)+K_M(z)+K_R(z),
\end{align}
where 
\begin{align}\label{adoua12}
K_T(z)&=[-\Delta,J_{T}](H_T-z)^{-1}\chi_T,\quad 
K_M(z)=[-\Delta,J_M](H_M-z)^{-1}(1-\chi_T-\chi_R),\quad \nonumber \\
K_R(z)&=[-\Delta,J_{R}](H_R-z)^{-1}\chi_R, 
\end{align}
and this is because $W_{T(RM)}$ commute with $J_{T(RM)}$. For example,
$$[W_M,J_M]{\bf A}=\{\nabla \ln(k_M^2)\} \{\nabla J_M \cdot {\bf A}\}=0$$
due to disjoint support properties of $\nabla k_M^2$ and $\nabla J_M$. 

\noindent If $|\Im(z)|$ is large enough, then one can prove that 
$$\max\left \{||K_T||,||K_M||,||K_R||\right \}\leq 1/10,$$
thus 
\begin{align}\label{apaispea10}
 (H-z)^{-1}&=S(z)[1+K_T(z)+K_M(z)+K_R(z)]^{-1}\nonumber \\
&=S(z)-(H-z)^{-1}[K_T(z)+K_M(z)+K_R(z)].
\end{align}
This equality can be extended to all $z$ where both sides make sense,
that is outside of a discrete set of singularities. 

\noindent Let us observe an important symmetry property coming from time
reversal invariance. Define 
\begin{align}\label{asuta1}
\widetilde{W}{\bf A}:=-\nabla \left (\{\nabla \ln(\overline{k^2})\}\cdot {\bf
    A}\right )-\overline{\delta k^2}{\bf A}.
\end{align}
This operator is depending on frequency through $k$, and due to
Assumption \ref{Asump2} and \eqref{aunspea9} we may write an important
identity for the adjoint:
\begin{align}\label{asuta2}
\widetilde{W}^*_{-\omega}=W_\omega.
\end{align}
If $\widetilde{H}_{-\omega}:=-\Delta +\widetilde{W}_{-\omega}$, then another consequence is:
\begin{align}\label{prima12}
 \left \{(\widetilde{H}_{-\omega}-\overline{z})^{-1}\right \}^*=(H_\omega-z)^{-1},
\end{align}
which is also true for $T,R$ and $M$ alone.

\noindent Write \eqref{apaispea10} with  $\overline{z}$ and $-\omega$, then
take the adjoint and use \eqref{prima12}. We obtain:
\begin{align}\label{atreia12}
 (H_\omega-z)^{-1}=\{S_{-\omega}(\overline{z})\}^*-\{K_T(\overline{z},-\omega)+K_M(\overline{z},-\omega)+
K_R(\overline{z},-\omega)\}^*(H_\omega-z)^{-1}.
\end{align}
We introduce the notations:
\begin{align}\label{apatra12}
\tilde{S}(z):=\chi_T
(H_T-z)^{-1}J_{T}+(1-\chi_T-\chi_R)(H_M-z)^{-1}J_M+
\chi_R(H_R-z)^{-1}J_{R},
\end{align}
and 
\begin{align}\label{acincea12}
\tilde{K}_T(z)&:=\{K_T(\overline{z},-\omega)\}^*= 
\chi_T(H_T-z)^{-1}[\Delta,J_{T}], \nonumber \\
\tilde{K}_M(z)&:=\{K_M(\overline{z},-\omega)\}^*=
(1-\chi_T-\chi_R)(H_M-z)^{-1}[\Delta,J_M],
\nonumber \\
\tilde{K}_R(z)&:=\{K_R(\overline{z},-\omega)\}^*= 
\chi_R(H_R-z)^{-1}[\Delta,J_{R}].
\end{align}
Then \eqref{atreia12} can be written in a more compact way:
\begin{align}\label{asasea12}
 (H-z)^{-1}=\tilde{S}(z)-
[\tilde{K}_T(z)+\tilde{K}_M(z)+
\tilde{K}_R(z)](H-z)^{-1}.
\end{align}

\noindent Here is the first of our technical results:
\begin{lemma}\label{lemma1}The following "almost decoupled" formula holds:
  \begin{align}\label{atreia13}
 \chi_{{\rm supp}(W_R)} (H-z)^{-1}\chi_{T}&=
\chi_{{\rm supp}(W_R)}
\tilde{K}_R(z)\cdot \chi_{R_a}(H-z)^{-1}\chi_{T_a} \cdot 
K_T(z)\chi_{T}.
\end{align}
\end{lemma}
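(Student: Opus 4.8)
\emph{Proof plan.} The plan is to sandwich the two dual resolvent identities \eqref{apaispea10} and \eqref{asasea12} between the cutoff $\chi_{{\rm supp}(W_R)}$ on the left and $\chi_T$ on the right, and to let the geometric separation \eqref{atreia10} annihilate every term but the one in \eqref{atreia13}. All the analytic content already sits in \eqref{apaispea10}--\eqref{asasea12}; what remains is bookkeeping with disjoint supports, together with one slightly delicate point about a sharp cutoff.

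First I would multiply \eqref{asasea12} by $\chi_{{\rm supp}(W_R)}$ on the left and by $\chi_T$ on the right. Since ${\rm supp}(W_R)\subset\Omega_R$ and $\Omega_R\cap\Omega_T=\emptyset$, we have $\chi_{{\rm supp}(W_R)}\chi_T=0$ and $\chi_{{\rm supp}(W_R)}(1-\chi_T-\chi_R)=0$; hence $\chi_{{\rm supp}(W_R)}\tilde K_T(z)=\chi_{{\rm supp}(W_R)}\tilde K_M(z)=0$, while $\chi_{{\rm supp}(W_R)}\tilde S(z)=\chi_{{\rm supp}(W_R)}(H_R-z)^{-1}J_R$ is killed on the right by $\chi_T$ because $J_R(\x)=J(\x-{\bf e})$ vanishes on $\Omega_T$ (the choice $r\le D_{TR}/2$ puts $\Omega_T$ outside the ball of radius $r$ about ${\bf e}\in\Omega_R$, so $J_R\chi_T=0$). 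What survives is $\chi_{{\rm supp}(W_R)}(H-z)^{-1}\chi_T=-\chi_{{\rm supp}(W_R)}\tilde K_R(z)(H-z)^{-1}\chi_T$.

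Next I would feed \eqref{apaispea10} into the surviving factor $(H-z)^{-1}\chi_T$. Using $\chi_T^2=\chi_T$ and $(1-\chi_T-\chi_R)\chi_T=\chi_R\chi_T=0$ gives $S(z)\chi_T=J_T(H_T-z)^{-1}\chi_T$, while $K_M(z)\chi_T=K_R(z)\chi_T=0$ and $K_T(z)\chi_T=K_T(z)$, so $(H-z)^{-1}\chi_T=J_T(H_T-z)^{-1}\chi_T-(H-z)^{-1}K_T(z)$. Substituting, the term $\chi_{{\rm supp}(W_R)}\tilde K_R(z)J_T(H_T-z)^{-1}\chi_T$ drops out: the rightmost factor of $\tilde K_R(z)$ is the first order operator $[\Delta,J_R]$, whose coefficients are supported in the annulus $r-1\le|\x-{\bf e}|\le r$; on that annulus \eqref{atreia10} forces $|\x|\ge r$, so $J_T$ and $\nabla J_T$ vanish there and $[\Delta,J_R]$ annihilates $J_T(H_T-z)^{-1}\chi_T$. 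The two minus signs combine into the plus of \eqref{atreia13}, leaving $\chi_{{\rm supp}(W_R)}(H-z)^{-1}\chi_T=\chi_{{\rm supp}(W_R)}\tilde K_R(z)(H-z)^{-1}K_T(z)$. It then remains to insert the annular cutoffs: $\chi_{T_a}K_T(z)=K_T(z)=K_T(z)\chi_T$ are immediate from \eqref{aunspea100} and $\chi_T^2=\chi_T$, and $[\Delta,J_R](H-z)^{-1}\chi_{T_a}(\cdot)=[\Delta,J_R]\chi_{R_a}(H-z)^{-1}\chi_{T_a}(\cdot)$ because $[\Delta,J_R]$ only reads its argument where $\nabla J_R$ or $\Delta J_R$ is nonzero, i.e. inside $R_a$; this yields exactly \eqref{atreia13}.

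The step I expect to be the real obstacle is this last insertion of the \emph{sharp} cutoff $\chi_{R_a}$ next to the differential operator $[\Delta,J_R]$: one must check that the jump of $\chi_{R_a}$ across $\partial R_a$ produces no spurious surface terms. This is precisely where the geometry enters again: $R_a$ lies entirely in free space and the source $\chi_{T_a}(\cdot)$ is supported far from it, so $\psi:=(H-z)^{-1}\chi_{T_a}(\cdot)$ is genuinely smooth near $R_a$ (there it solves $(-\Delta-z)\psi=0$), and the smoothness of $J_R$ forces $\nabla J_R$ to vanish on the two boundary spheres $|\x-{\bf e}|=r-1$ and $|\x-{\bf e}|=r$, which annihilates exactly the offending surface contributions in $\nabla J_R\cdot\nabla(\chi_{R_a}\psi)$. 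The remaining ingredients — the commutator identities \eqref{adoua12}, \eqref{acincea12}, the small-norm bounds behind \eqref{apaispea10}, and the analytic continuation to $z=k_0^2+j0_+$ — are all inherited from the material preceding the lemma, with the caveat (acknowledged by the authors) that the limiting absorption principle for the non-selfadjoint $H$ is taken on faith.
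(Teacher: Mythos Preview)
Your proof is correct and follows essentially the same strategy as the paper's: sandwich the two dual resolvent identities between the cutoffs and let disjoint supports kill all but one term. The only difference is cosmetic --- you apply \eqref{asasea12} first and then \eqref{apaispea10}, whereas the paper reverses that order (first \eqref{apaispea10} to get \eqref{prima13}, then \eqref{asasea12}) --- and you are more explicit than the paper about why the sharp annular cutoff $\chi_{R_a}$ can be inserted next to $[\Delta,J_R]$, a point the paper dispatches with a bare reference to \eqref{aunspea100}.
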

\begin{proof}
There are several important things to note here. First, we have the support condition:  
\begin{align}
{\rm supp}(\J_{in})\subset {\rm supp}(\chi_T).
\end{align}
Second, due to the support properties of our various cut-off
functions we have
$K_T^2=K_M^2=K_R^2=K_TK_R=K_RK_T=0$. 
Now if we introduce \eqref{apaispea10} in the left hand side of \eqref{atreia13}, we see
that the term $ \chi_{{\rm supp}(W_R)} S(z) \chi_{{\rm
    supp}(W_T)}=0$ because of various support properties. We obtain:
 \begin{align}\label{prima13}
 \chi_{{\rm supp}(W_R)} (H-z)^{-1}\chi_T=-
\chi_{{\rm supp}(W_R)} (H-z)^{-1}K_T(z)\chi_T.
\end{align}
Now use \eqref{asasea12} in the right hand side of \eqref{prima13}. We
obtain:
 \begin{align}\label{adoua13}
 \chi_{{\rm supp}(W_R)} (H-z)^{-1}\chi_{T}&=-
\chi_{{\rm supp}(W_R)} \tilde{S}(z)K_T(z)\chi_{T}\nonumber \\
&+\chi_{{\rm supp}(W_R)} \tilde{K}_R(z)(H-z)^{-1}K_T(z)\chi_{T}.
\end{align}
But the first term on the rhs of the above equality is zero, again due
to the supports. Finally, use \eqref{aunspea100} in the above equation, and the proof is over.
\end{proof}

\vspace{0.5cm}

{\it Remark}. Note that $\chi_{{\rm supp}(W_R)} (H-z)^{-1}\chi_{T}$ is the operator entering in
\eqref{februar6} giving the general transfer matrix element, because
$\chi_{T}{\bf J}_{\rm in}={\bf J}_{\rm in}$ and 
$\chi_{{\rm supp}(W_R)}\sigma_R=\sigma_R$. Although \eqref{atreia13} 
is an {\bf exact} formula, it is not very
useful yet because in the middle of the right hand side we still have the full
resolvent and not just the resolvent corresponding to the
environment. But in the next subsection we will show that 
$\chi_{R_a}(H-z)^{-1}\chi_{T_a}$ can be better and better approximated
with $\chi_{R_a}(H_M-z)^{-1}\chi_{T_a}$ if the space occupied by the
antennas become smaller and smaller compared to the distances between
different objects. 

\subsection{The study of $\chi_{R_a}(H-z)^{-1}\chi_{T_a} $}

Remember that $\chi_{T_a}$ and $\chi_{R_a}$ are the characteristic functions
of two spherical annuli which are at a distance $r\geq 1$ from both the
transmitter(s) and the receiver(s). If the physical space occupied by
our antennas becomes very small (mathematically this means that the volume
of the supports of $\chi_T$ and $\chi_R$ i.e. of $\Omega_T$ and $\Omega_R$ is very small), 
then it would be natural to be able to
approximate $\chi_{R_a}(H-z)^{-1}\chi_{T_a} $ with 
$\chi_{R_a}(H_M-z)^{-1}\chi_{T_a} $. Let us show this here. 

\begin{lemma}\label{lemma2} Let $v_r:={\rm Vol}(\Omega_R)$ and $v_t:={\rm Vol}(\Omega_T)$. Then outside a discrete set of frequencies, the vector potential 
can be approximated in any $C^k(\Omega_R)$ norm with $k\geq 0$ in the following way:
\begin{equation}\label{asaptea13}
\chi_R\A -\tilde{K}_R(k_0^2+j0_+)\chi_{R_a}[H_M-k_0^2-j0_+]^{-1}\chi_{T_a} 
K_T(k_0^2+j0_+)\J_{\rm in}=o(\max\{v_T,v_r\}).
\end{equation}
\end{lemma}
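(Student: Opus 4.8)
The plan is to combine the exact formula from Lemma~\ref{lemma1} with a perturbative estimate showing that the middle resolvent $\chi_{R_a}(H-z)^{-1}\chi_{T_a}$ is close to $\chi_{R_a}(H_M-z)^{-1}\chi_{T_a}$ when the antenna volumes $v_T,v_r$ are small. The starting point is the identity
\begin{align}\label{plan1}
\chi_{{\rm supp}(W_R)}(H-z)^{-1}\chi_T=\chi_{{\rm supp}(W_R)}\tilde{K}_R(z)\,\chi_{R_a}(H-z)^{-1}\chi_{T_a}\,K_T(z)\chi_T,
\end{align}
valid at $z=k_0^2+j0_+$ outside a discrete set of frequencies, together with the observation (from the Remark) that the left-hand side is exactly the operator producing $\chi_R\A$ once applied to $\J_{\rm in}$. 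Comparing \eqref{plan1} with the claimed formula \eqref{asaptea13}, the whole statement reduces to estimating the difference
\begin{align}\label{plan2}
\chi_{R_a}\bigl[(H-z)^{-1}-(H_M-z)^{-1}\bigr]\chi_{T_a}
\end{align}
in the appropriate operator norm (from $L^2$ supported near the transmitter into $C^k(\Omega_R)$, which is controlled by a weighted Sobolev norm by elliptic regularity since $\Omega_R$ sits in free space away from all perturbations), and showing it is $o(\max\{v_T,v_r\})$.

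First I would set up a resolvent expansion for $\chi_{R_a}(H-z)^{-1}\chi_{T_a}$ analogous to the one used for the full resolvent: using the decoupled parametrix $S(z)$ from \eqref{adoispea10} and the Neumann series \eqref{apaispea10}, I would iterate until I isolate a leading term involving only $H_M$ and a remainder. Concretely, because $\chi_{T_a}$ is supported in the annulus near the transmitter and $\chi_{R_a}$ near the receiver, inserting $S(z)$ picks out the transmitter and receiver resolvent pieces, and passing from the transmitter region to the receiver region forces the propagation to go through $(H_M-z)^{-1}$; each ``hop'' from an antenna region to $H_M$ and back carries a commutator $[-\Delta,J_{T}]$ or $[-\Delta,J_{R}]$ whose support lies in a fixed free-space annulus, but the crucial smallness comes from sandwiching a factor of the form $\chi_T(H_T-z)^{-1}\chi_T$ (or the $R$ analogue): since $H_T=-\Delta+W_T$ with $W_T$ supported in a set of volume $v_T$, and since $\chi_T(-\Delta-z)^{-1}\chi_T$ applied to a smooth compactly supported density produces a field whose relevant norm scales like $v_T$ (the source $W_T\A$ has $L^1$ mass $O(v_T)$, or equivalently $W_T\chi_T(\cdots)$ has operator norm $O(v_T)$ into the free Green function bound), each such sandwiched block is $O(v_T)$ respectively $O(v_r)$. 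The leading surviving term is precisely $\chi_{R_a}(H_M-z)^{-1}\chi_{T_a}$, and everything else is a product containing at least one factor that is $o(1)$ as the volumes shrink, hence the total error is $o(\max\{v_T,v_r\})$.

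The main obstacle I expect is making the volume-smallness quantitative for a \emph{non-self-adjoint}, first-order perturbation $W_T$ (it contains $\{\nabla\cdot\A\}\nabla\ln(k_T^2)$, not just a potential term), on the continuous spectrum at $z=k_0^2+j0_+$, where one only has the limiting absorption principle rather than a resolvent bound from spectral calculus. I would handle this by working in weighted $L^2$ spaces: the free resolvent $(-\Delta-k_0^2-j0_+)^{-1}$ is bounded $L^2_{s}\to L^2_{-s}$ for $s>1/2$, and since all cut-offs $\chi_T,\chi_R,\chi_{T_a},\chi_{R_a}$ are compactly supported these weights are harmless; the perturbation terms are then estimated by noting that $W_T$ maps into functions supported in $\Omega_T$ with an $L^1$-to-$L^2$ gain of order $v_T^{1/2}$ (Cauchy--Schwarz over a set of volume $v_T$), and that composing two such maps — which is what happens when the field leaves the transmitter region and some of it returns before escaping to $H_M$ — yields the full factor $v_T$. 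A secondary technical point is justifying the continuity up to the real axis and the ``discrete set of frequencies'' caveat uniformly as $v_T,v_r\to0$; here I would simply invoke, as the paper does elsewhere, that the relevant Fredholm determinants depend analytically on $z$ and continuously on the small parameters, so the exceptional set does not accumulate, and the $o(\cdot)$ estimate holds on the complement. Finally, the translation of the operator-norm bound on \eqref{plan2} into the stated $C^k(\Omega_R)$ bound on $\chi_R\A$ is routine interior elliptic regularity: on $\Omega_R$ the operator $H$ coincides with $-\Delta+W_R$, $\A$ solves an elliptic equation there with the right-hand side already estimated, so one bootstraps to any $C^k$ norm with constants independent of the vanishing volumes.
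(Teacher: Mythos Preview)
Your reduction of the problem to estimating the difference \eqref{plan2}, together with the final elliptic-regularity upgrade from $L^2$ to $C^k(\Omega_R)$, matches the paper exactly. The route you propose for the middle step, however, is different: rather than iterating the geometric parametrix $S(z)$ of \eqref{adoispea10}--\eqref{apaispea10}, the paper invokes the Lippmann--Schwinger representation from the Appendix (formula \eqref{decembrie10}) with $H_0=H_M$ as the reference operator and $W_1=W_T$, $W_2=W_R$ as the two perturbations. This yields directly
\[
\chi_{R_a}(H-z)^{-1}\chi_{T_a}-\chi_{R_a}(H_M-z)^{-1}\chi_{T_a}
=-\sum_{m,n=1}^{2}\chi_{R_a}(H_M-z)^{-1}\chi_n\,A_{nm}(z)\,\chi_m(H_M-z)^{-1}\chi_{T_a},
\]
and the smallness is read off from the Hilbert--Schmidt norms of the outer factors: since the integral kernel of $(H_M-z)^{-1}$ is smooth away from the diagonal and the supports of $\chi_{T_a},\chi_{R_a}$ are disjoint from those of $\chi_1=\chi_T$, $\chi_2=\chi_R$, one has $\|\chi_{R_a}(H_M-z)^{-1}\chi_n\|_{\rm HS}=O(\sqrt{v_n})$ and likewise for $\chi_m(H_M-z)^{-1}\chi_{T_a}$, simply by integrating a bounded kernel over a set of small volume. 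The middle factor $A_{nm}(z)$ is bounded outside a discrete set of frequencies by the Fredholm alternative in the Appendix.

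This is both shorter and more robust than your plan. In your scheme the smallness must come from factors like $W_T\chi_T(\cdots)$ or, after one iteration of \eqref{apaispea10}, from $[-\Delta,J_M]$; but the coefficients of $W_T$ contain $\nabla\ln k_T^2$ and those of $[-\Delta,J_M]$ contain $\nabla J_M$, i.e.\ derivatives of profiles that must transition across the shrinking set $\Omega_T$ (or $\Omega_R$). Without extra hypotheses these gradients scale like the inverse of the linear size, which cancels the volume gain in your ``$L^1$ mass $O(v_T)$'' heuristic --- the $L^1\!\to\!L^2$ Cauchy--Schwarz step you invoke gives $v_T^{1/2}$, not $v_T$, and the missing half-power is exactly what the gradient blow-up eats. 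The paper's approach sidesteps this completely: only the characteristic functions $\chi_T,\chi_R$ appear in the error term, no derivatives of cut-offs are taken, and the uniform boundedness of $A_{nm}$ as the volumes shrink is inherited from the Appendix construction.
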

\begin{proof}
In the Appendix we have formulated a Lippmann-Schwinger type
representation of the total resolvent in the presence of $N$
perturbations, see \eqref{decembrie10}. We want to particularize that
formula for $N=2$ objects. We put $H_0=H_M=-\Delta+W_M$, we take 
$W_T$ to be $W_1$, and $W_R$ will
be $W_2$. The formula \eqref{decembrie10} reads as:
\begin{align}\label{premier1}
R(z)=(H_M-z)^{-1}-\sum_{m=1}^2\sum_{n=1}^2(H_M-z)^{-1}\chi_n A_{nm}(z)\chi_m
(H_M-z)^{-1}.
\end{align}
This implies:
\begin{align}\label{premier2}
&\chi_{R_a}(H-z)^{-1}\chi_{T_a}-\chi_{R_a}(H_M-z)^{-1}\chi_{T_a}\\
&=-\sum_{m=1}^2\sum_{n=1}^2\chi_{R_a}(H_M-z)^{-1}\chi_n A_{nm}(z)\chi_m
(H_M-z)^{-1}\chi_{T_a}.\nonumber
\end{align}
Now the idea is to show that the right hand side of \eqref{premier2}
is small when the volume of the supports of $\chi_1$ and $\chi_2$ are
smaller and smaller.  We need to estimate the norm of the
operator 
$$\chi_{R_a}(H_M-z)^{-1}\chi_n A_{nm}(z)\chi_m
(H_M-z)^{-1}\chi_{T_a}.$$
The factor $\chi_{R_a}(H_M-z)^{-1}\chi_n$ is a Hilbert-Schmidt
operator and its Hilbert-Schmidt norm tends to zero like
$\sqrt{v_{t,r}}$. A similar result holds true for $\chi_m
(H_M-z)^{-1}\chi_{T_a}$. It means that the right hand side of
\eqref{premier2} is close to zero when the linear dimensions of our
antennas become very small. 

\noindent We can thus write:
\begin{align}\label{january31}
 \chi_{R_a}(H-(k_0^2+j0_+))^{-1}\chi_{T_a}=
\chi_{R_a}(H_M-(k_0^2+j0_+))^{-1}\chi_{T_a}
+\mathcal{O}\left(\max\{\sqrt{v_t},\sqrt{v_r} \}\right ),
\end{align}
outside of a discrete set of frequencies. This proves the lemma for the $L^2$ norm; one can actually show that this 
convergence is also true in any $C^k$ norm; the ingredients are  
the elliptic regularity and the fact that the supports of $\chi_T$, $\chi_R$, $\chi_{R_a}$ and $\chi_{T_a}$ 
are disjoint. The proof is over.
\end{proof}

\vspace{0.5cm}

{\it Remark}. An obvious interpretation of this formula is the following: the input current
is transformed into a signal by $K_T$ 
(only depending on the transmitter(s)) and sent 
into the annulus given by $\chi_{T_a}$. From there, $H_M$ scatters the
signal into the observation region of the receiver(s), or
$\chi_{R_a}$. Finally, $\tilde{K}_R$ takes over the signal and sends it to the
receiver(s).  Note that the diameters of $\chi_{T_a}$ and $\chi_{R_a}$
{\it do not have to be large}, and this is exactly what happens when some 
scatterers are close to our antennas. But {\it the linear dimensions
  of the emitting and receveing antennas have to be small} in order to
be sure that we can approximate
$\chi_{R_a}(H-(k_0^2+j0_+))^{-1}\chi_{T_a}$ with 
$\chi_{R_a}(H_M-(k_0^2+j0_+))^{-1}\chi_{T_a}$. 

\noindent Another important observation: if the linear dimensions of the antennas are
important relatively to the other distances in our decomposition, 
then the expression in \eqref{asaptea13} is not correct. We would need
to take into consideration the complete 
formula for the full resolvent \eqref{premier2}, because we can no
longer ignore the higher order coupling between the emitting and receiving
antennas given by the {\bf exact} formula \eqref{premier2}. 

\subsection{The main theorem}

Remember that the transfer matrix in \eqref{februar6} is completely
characterized by the value of ${\bf A}=(H-k_0^2-j0_+)^{-1}{\bf J}_{\rm
  in}$. In this subsection we assume that the antennas are small enough so that
\eqref{asaptea13} makes a good approximation. 
In order to simplify notation, we write $z_0$ instead of
$k_0^2+j0_+$. All the resolvents we have considered until now are
integral operators in the following sense:
\begin{equation}\label{acinspea1}
\{(H-z_0)^{-1}\Psi\}_s(\x)=
\sum_{t=1}^3\int_{\R^3}G_{st}(\x,\x';z_0)\Psi_t(\x')d\x',\quad s\in\{1,2,3\}.
\end{equation} 
Their integral kernels are smooth functions of $\x$ and $\x'$ outside
the diagonal $\x=\x'$, due to general elliptic regularity results. We
will now express the quantity in \eqref{asaptea13} with the help of
the various integral kernels appearing in that equality. Using
\eqref{acincea12} and \eqref{adoua12} we may write (here $\x\in \Omega_R$) : 
\begin{align}\label{acinspea2}
A_{t_1}(\x)&\approx -\sum_{t_2}\int_{\R^3}d {\bf u}G_{t_1t_2}^{(R)}(\x,{\bf
  u};z_0) \sum_{s_1}
\left \{\frac{\partial}{\partial u_{s_1}}\frac{\partial J_R}{\partial
    u_{s_1}}+\frac{\partial J_R}{\partial
    u_{s_1}}\frac{\partial}{\partial u_{s_1}}\right
\}\sum_{t_3}\int_{\R^3} d{\bf v}G_{t_2t_3}^{(M)}({\bf
  u}, {\bf v};z_0)\nonumber \\
&\cdot \sum_{s_2}
\left \{\frac{\partial}{\partial v_{s_2}}\frac{\partial J_T}{\partial
    v_{s_2}}+\frac{\partial J_T}{\partial
    v_{s_2}}\frac{\partial}{\partial v_{s_2}}\right \}
\sum_{t_4}\int_{\R^3}d\x_T G_{t_3t_4}^{(T)}({\bf
  v},\y;z_0)J_{{\rm in},t_4}(\y).
\end{align}
In order to simplify the notation, we will assume that the
transmitters are located near the origin of coordinates,
 while the receivers are located near ${\bf e}$. In this formula we will choose $J_T$ to be a mollifier of 
the characteristic function of the ball $B_{r}(0)=\{|\x|\leq
r\}$, and   $J_R$ to be a mollifier of 
the characteristic function of $B_{r}({\bf e})=\{|\x-{\bf e}|\leq
r\}$. In $B_{r}(0)$ we may choose to work with local spherical
coordinates $(\rho,\hat{\x}_T)$, with $\rho\geq 0$ and 
$\hat{\x}_T\in S^2$. The same thing
can be done for $B_{r}({\bf e})$, and denote its local spherical
coordinates by $(\rho',\hat{\x}_R)$. We choose $J_T$ and $J_R$ 
to be radial in these coordinates. Then as they converge towards the
characteristic functions, one can prove that the expression in
\eqref{acinspea2} converges to:
\begin{align}\label{acinspea3}
A_{t_1}(\x)&\approx
-r^4\sum_{t_2}\int_{S^2}d\hat{\x}_R\int_{\rho'}d\rho'
G_{t_1t_2}^{(R)}(\x,{\bf e}+\rho'\hat{\x}_R;z_0)
\left \{\frac{\partial}{\partial \rho'
  }\delta(\rho'-r)+\delta(\rho'-r)
\frac{\partial}{\partial\rho'}\right
\}\nonumber \\
&\sum_{t_3}\int_{S^2}d\hat{\x}\int_{\rho}d\rho
G_{t_2t_3}^{(M)}({\bf e}+\rho'\hat{\x}_R,
\rho\hat{\x}_T;z_0)\nonumber \\
& \left \{\frac{\partial}{\partial \rho
  }\delta(\rho-r)+\delta(\rho-r)
\frac{\partial}{\partial \rho}\right
\}
\sum_{t_4}\int_{\R^3}d\y G_{t_3t_4}^{(T)}(\rho\hat{\x},\y;z_0)J_{{\rm in},t_4}(\y).
\end{align}
Perform the radial integrals and write $A_{t_1}(\x)$ as a sum of four
terms:
\begin{align}\label{acinspea4}
A_{t_1}(\x)&\approx
-r^4\int_{\R^3}d\y\sum_{t_2,t_3,t_4}\int_{S^2}\int_{S^2}d\hat{\x}_Td\hat{\x}_R
\nonumber \\
&\left \{\{\partial_{\rho'} G_{t_1t_2}^{(R)}(\x;{\bf
    e}+\rho'\hat{\x}_R;z_0)\}_{\rho'=r}
\{\partial_{\rho}G_{t_2t_3}^{(M)}({\bf e}+r\hat{\x}_R;\rho\hat{\x}_T;z_0)\}_{\rho=r}
 G_{t_3t_4}^{(T)}(r\hat{\x}_T;\y;z_0)\right . \nonumber \\
&-\{\partial_{\rho'} G_{t_1t_2}^{(R)}(\x,{\bf e}+\rho'\hat{\x}_R;z_0)\}_{\rho'=r}
G_{t_2t_3}^{(M)}({\bf e}+r\hat{\x}_R;r\hat{\x}_T;z_0)
 \{\partial_{\rho}
 G_{t_3t_4}^{(T)}(\rho\hat{\x}_T;\y;z_0)\}_{\rho=r}\nonumber \\
&+G_{t_1t_2}^{(R)}(\x,{\bf e}+r\hat{\x}_R;z_0)
\{\partial_{\rho'}G_{t_2t_3}^{(M)}({\bf
  e}+\rho'\hat{\x}_R,r\hat{\x}_T;z_0)\}_
{\rho'=r}
 \{\partial_{\rho}
 G_{t_3t_4}^{(T)}(\rho\hat{\x}_T;\y;z_0)\}_{\rho=r}\nonumber \\
&\left .-G_{t_1t_2}^{(R)}(\x;{\bf e}+r\hat{\x}_R;z_0)
\{\partial_{\rho',\rho }^2G_{t_2t_3}^{(M)}({\bf e}+\rho'\hat{\x}_R,
\rho\hat{\x}_T;z_0)\}_
{\rho=\rho'=r}
 G_{t_3t_4}^{(T)}(r\hat{\x}_T;\y;z_0)\right \}\nonumber \\
&J_{{\rm in},t_4}(\y).
\end{align}
Let us introduce the notation (boldface ${\bf G}$ indicates $3\times
3$ matrices):
\begin{equation}\label{acinspea5}
{\bf c}_T(\hat{\x}_T;\y):=\left (\begin{array}{c} \{\partial_{\rho}
 {\bf G}^{(T)}(\rho\hat{\x}_T,\y;z_0)\}_{\rho=r} \\ {\bf
   G}^{(T)}(r\hat{\x}_T;\y;z_0) \end{array}\right ),\;{\bf
c}_R(\hat{\x}_R;\x):=\left (\begin{array}{l} \{\partial_{\rho'}
 {\bf G}^{(R)}(\x,{\bf e}+\rho'\hat{\x}_R;z_0)\}_{\rho'=r} \\ {\bf
   G}^{(R)}(\x;{\bf e}+r\hat{\x}_R;z_0) \end{array}\right ),
\end{equation}
and 
\begin{equation}\label{acinspea6}
\mathcal{M}(\hat{\x}_R;\hat{\x}_T):=\left (\begin{array}{cc}
{\bf G}^{(M)}({\bf e}+r\hat{\x}_R,r\hat{\x}_T;z_0) & -
\{\partial_{\rho}
{\bf G}^{(M)}({\bf e}+r\hat{\x}_R,\rho\hat{\x}_T;z_0)\}_{\rho=r} \\ 
-\{\partial_{\rho'}{\bf G}^{(M)}({\bf
  e}+\rho'\hat{\x}_R,r\hat{\x}_T;z_0)\}_{\rho'=r}
& \{\partial_{\rho',\rho }^2{\bf G}^{(M)}({\bf e}+
\rho'\hat{\x}_R;\rho\hat{\x}_T;z_0)\}_
{\rho,\rho'=r}
\end{array}\right ).
\end{equation}
Then the key equation \eqref{asaptea13} can be rewritten as:

\begin{align}\label{acinspea7}
{\bf A}(\x)&=\left\{(H-z_0)^{-1}{\bf J}_{\rm in}\right \}(\x)\nonumber
\\ 
&\approx r^4
\int d\y \int_{S^2}\int_{S^2}d\hat{\x}_Td\hat{\x}_R\{{\bf
c}_R(\hat{\x}_R;\x)\}^t \mathcal{M}(\hat{\x}_R;\hat{\x}_T){\bf
c}_T(\hat{\x}_T;\y){\bf J}_{\rm in}(\y),
\end{align}
where the transposition operation is considered with respect to the
$2\times 2$ structure. 
 
Already here we can see the complete separation between the
transmitters, receivers, and the rest of the scatterers in the
environment. Introducing this formula back into
\eqref{februar6}, we prove the following theorem:
\begin{theorem}\label{theorem1}In the case in which the linear dimensions of the emitting and receiving antennas 
are small, we can approximate the transfer matrix elements by the formula:
\begin{align}\label{februar7}
\mathcal{H}_{mn}\approx r^4 
\int_{S^2}\int_{S^2}d\hat{\x}_Td\hat{\x}_R \left \langle {\bf
g}_R^{(m)}(\hat{\x}_R) ,\mathcal{M}(\hat{\x}_R;\hat{\x}_T){\bf
g}_T^{(n)}(\hat{\x}_T)\right \rangle ,
\end{align}
where $\langle\cdot ,\cdot \rangle$ denotes the usual dot-product in $\R^6$, while
\begin{align}\label{februar8}
{\bf
g}_T^{(n)}(\hat{\x}_T)&:=\int {\bf
c}_T(\hat{\x}_T;\y){\bf J}_{\rm in}^{(n)}(\y) d\y, \\
{\bf g}_R^{(m)}(\hat{\x}_R) &:=\int_{S_m}
d{\bf S}_m \cdot \left \{ -j \mu_0\omega \{{\bf
c}_R(\hat{\x}_R;\x)\}^t+\nabla \left
  (\frac{1}{j\epsilon\omega+
\sigma_R^{(m)}}\nabla \cdot \{{\bf
c}_R(\hat{\x}_R;\x)\}^t \right)\right \}\sigma_R^{(m)}(\x) \nonumber 
\end{align}
are two six-dimensional vectors only depending on the transmitters and
receivers respectively. They characterize the radiation pattern of our
transmitting and receiving antennas, and they do not change with the
environment.
\end{theorem}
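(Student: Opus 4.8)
The plan is to take the expression already derived in equation \eqref{acinspea7} for the vector potential ${\bf A}(\x)={(H-z_0)^{-1}{\bf J}_{\rm in}}(\x)$ and simply substitute it into the formula \eqref{februar6} for the transfer matrix elements $\mathcal{H}_{mn}$. The whole statement is essentially bookkeeping: it is a matter of isolating the dependence on the transmitting antenna from the dependence on the receiving antenna. First I would note that, by linearity of the resolvent and of \eqref{februar3}, it suffices to compute the field ${\bf A}^{(n)}(\x)$ produced when only the $n$-th normalized current density ${\bf J}_{\rm in}^{(n)}$ is fed; by \eqref{acinspea7} this is
\begin{align}\label{planeq1}
{\bf A}^{(n)}(\x)\approx r^4\int_{S^2}\int_{S^2}d\hat{\x}_Td\hat{\x}_R\,\{{\bf c}_R(\hat{\x}_R;\x)\}^t\mathcal{M}(\hat{\x}_R;\hat{\x}_T)\,{\bf g}_T^{(n)}(\hat{\x}_T),
\end{align}
where ${\bf g}_T^{(n)}(\hat{\x}_T)=\int{\bf c}_T(\hat{\x}_T;\y){\bf J}_{\rm in}^{(n)}(\y)\,d\y$ is exactly the $6$-vector in \eqref{februar8}; this step uses only that the $\y$-integration in \eqref{acinspea7} commutes with the angular integrations and with the matrix multiplication, so the $\y$-dependent factor ${\bf c}_T(\hat{\x}_T;\y){\bf J}_{\rm in}^{(n)}(\y)$ can be integrated out first. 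At this point the transmitter and the environment have already been separated.

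Next I would plug \eqref{planeq1} into the formula for $\mathcal{H}_{mn}$. Recall from \eqref{februar6} that
\begin{align}\label{planeq2}
\mathcal{H}_{mn}=\int_{S_m}d{\bf S}\cdot\sigma_R^{(m)}(\x)\left\{-j\mu_0\omega{\bf A}^{(n)}(\x)+\nabla\left(\frac{1}{j\epsilon\omega+\sigma_R^{(m)}}\nabla\cdot{\bf A}^{(n)}(\x)\right)\right\}.
\end{align}
Inserting \eqref{planeq1}, the linear differential-and-surface-integral operator acting in the $\x$ variable (namely $\Psi\mapsto\int_{S_m}d{\bf S}\cdot\sigma_R^{(m)}(\{-j\mu_0\omega\Psi+\nabla(\frac{1}{j\epsilon\omega+\sigma_R^{(m)}}\nabla\cdot\Psi)\})$) is applied to the $\x$-dependent factor, which is $\{{\bf c}_R(\hat{\x}_R;\x)\}^t$ (a $6$-component object, or a $6\times3$ matrix of integral kernels smooth in $\x$ on $\Omega_R$). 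Because this operator acts only on $\x$ and commutes with the angular integrals and with the right-multiplication by $\mathcal{M}(\hat{\x}_R;\hat{\x}_T){\bf g}_T^{(n)}(\hat{\x}_T)$, we may move it inside the integrals and let it hit $\{{\bf c}_R(\hat{\x}_R;\x)\}^t$; the result of that action is, by definition, precisely $\{{\bf g}_R^{(m)}(\hat{\x}_R)\}^t$ as written in \eqref{februar8}. Collecting terms we obtain
\begin{align}\label{planeq3}
\mathcal{H}_{mn}\approx r^4\int_{S^2}\int_{S^2}d\hat{\x}_Td\hat{\x}_R\,\{{\bf g}_R^{(m)}(\hat{\x}_R)\}^t\,\mathcal{M}(\hat{\x}_R;\hat{\x}_T)\,{\bf g}_T^{(n)}(\hat{\x}_T),
\end{align}
which is \eqref{februar7} once the row-vector--matrix--column-vector product $\{{\bf g}_R^{(m)}\}^t\mathcal{M}{\bf g}_T^{(n)}$ is rewritten as the dot-product $\langle{\bf g}_R^{(m)},\mathcal{M}{\bf g}_T^{(n)}\rangle$ in $\R^6$ (here one must keep track that the transposition in \eqref{acinspea7} is with respect to the $2\times2$ block structure while the $6$-vectors carry an additional internal $\R^3$ index, so $\langle\cdot,\cdot\rangle$ in the statement really is the full Euclidean product on $\R^6$). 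Finally I would remark that ${\bf g}_T^{(n)}$ depends only on the kernels $G^{(T)}$ of $H_T$ and on ${\bf J}_{\rm in}^{(n)}$, hence only on the transmitters, and similarly ${\bf g}_R^{(m)}$ only on $G^{(R)}$, $S_m$ and $\sigma_R^{(m)}$, hence only on the receivers, establishing the claimed environment-independence.

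The only genuine subtlety — and the step I would flag as the main obstacle — is justifying the interchange of the $\x$-differential-plus-surface-integral operator with the $\hat{\x}_R$-integration and with the action of passing to the limit of mollifiers that produced \eqref{acinspea7}. This is where Lemma \ref{lemma2} is essential: it guarantees that the approximation \eqref{asaptea13}, hence \eqref{acinspea7}, holds not merely in $L^2(\Omega_R)$ but in every $C^k(\Omega_R)$ norm, so that ${\bf A}^{(n)}$ and all the derivatives appearing in \eqref{planeq2} converge uniformly on a neighbourhood of $S_m$; since the supports of $\chi_R$, $\chi_{R_a}$, $\chi_T$, $\chi_{T_a}$ are mutually disjoint and lie in free space, the kernels $G^{(R)}(\x,{\bf e}+\rho'\hat{\x}_R;z_0)$ are smooth and bounded, together with all their $\x$- and $\rho'$-derivatives, uniformly for $\x$ near $S_m$ and $\hat{\x}_R\in S^2$, which legitimizes differentiating under the integral sign and passing the derivatives through. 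Everything else is routine manipulation of the integral-kernel representation \eqref{acinspea1} and the definitions \eqref{acinspea5}--\eqref{acinspea6}; no new estimate beyond those already in Lemmas \ref{lemma1} and \ref{lemma2} is needed.
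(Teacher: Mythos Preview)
Your proposal is correct and follows essentially the same route as the paper: the paper's entire ``proof'' is the single sentence ``Introducing this formula back into \eqref{februar6}, we prove the following theorem,'' i.e.\ exactly the substitution of \eqref{acinspea7} into \eqref{februar6} that you carry out in detail. Your added discussion of why the $\x$-differentiation and surface integral commute with the angular integrals (via the $C^k$ convergence from Lemma~\ref{lemma2} and the smoothness of the off-diagonal kernels) is more careful than what the paper writes, but it is precisely the justification implicitly relied upon there.
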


\subsection{Application: the spread function in the case of distant scatterers}

The key mathematical object which completely characterizes how the various
scatterers affect the signal is the Green function of the environment ${\bf
  G}^{(M)}(\x;\x';k_0^2)$. We will try to obtain
some relatively simple formulas for this Green function. We have
already given in the Appendix a fairly
general expression for the resolvent in the presence of $N$
scatterers, see \eqref{decembrie9} and \eqref{decembrie10}. 

We would like to give an even simpler, yet generic expression of this Green
function when the distance between the scatterers and the two balls
$B_r(0)$ and $B_r({\bf e})$ is large enough.  As before,
let us use $z_0=k_0^2$. Assume that the total number of these
scatterers is $S\geq 1$. The free Green function is given by the well 
known formula 
\begin{equation}\label{osuta1}
{\bf G}_{sp}^{\rm free}(\x;\y;k_0^2)=\delta_{sp}\frac{e^{j k_0 |\x-\y|}}{4\pi |\x-\y|},\quad k_0=\frac{|\omega|}{c}.
\end{equation}
Let us denote by $R_0(z_0)$ its
associated operator. 

Then \eqref{decembrie10} gives:
\begin{align}\label{second1}
(H_M-z_0)^{-1}&=R_0(z_0)-\sum_{u=1}^S\sum_{v=1}^SR_0(z_0)\chi_u
A_{uv}(z_0)\chi_vR_0(z_0)\nonumber \\
&=: R_0(z_0)-R_0(z_0)T_M(z_0)R_0(z_0),
\end{align}
Here the operator $T_M=\sum_{u=1}^S\sum_{v=1}^S \chi_u A_{uv}\chi_v$ contains all possible interactions between
various scatterers present in the environment. 

Now assume that the
distance between scatterers and the observation points
$r\hat{\x}_T$ and ${\bf e}+r\hat{\x}_R$ 
is much larger than the radius $r$ (see figure 2). 

\begin{center}

{\rm Figure 2.}

             \includegraphics[scale=0.5]{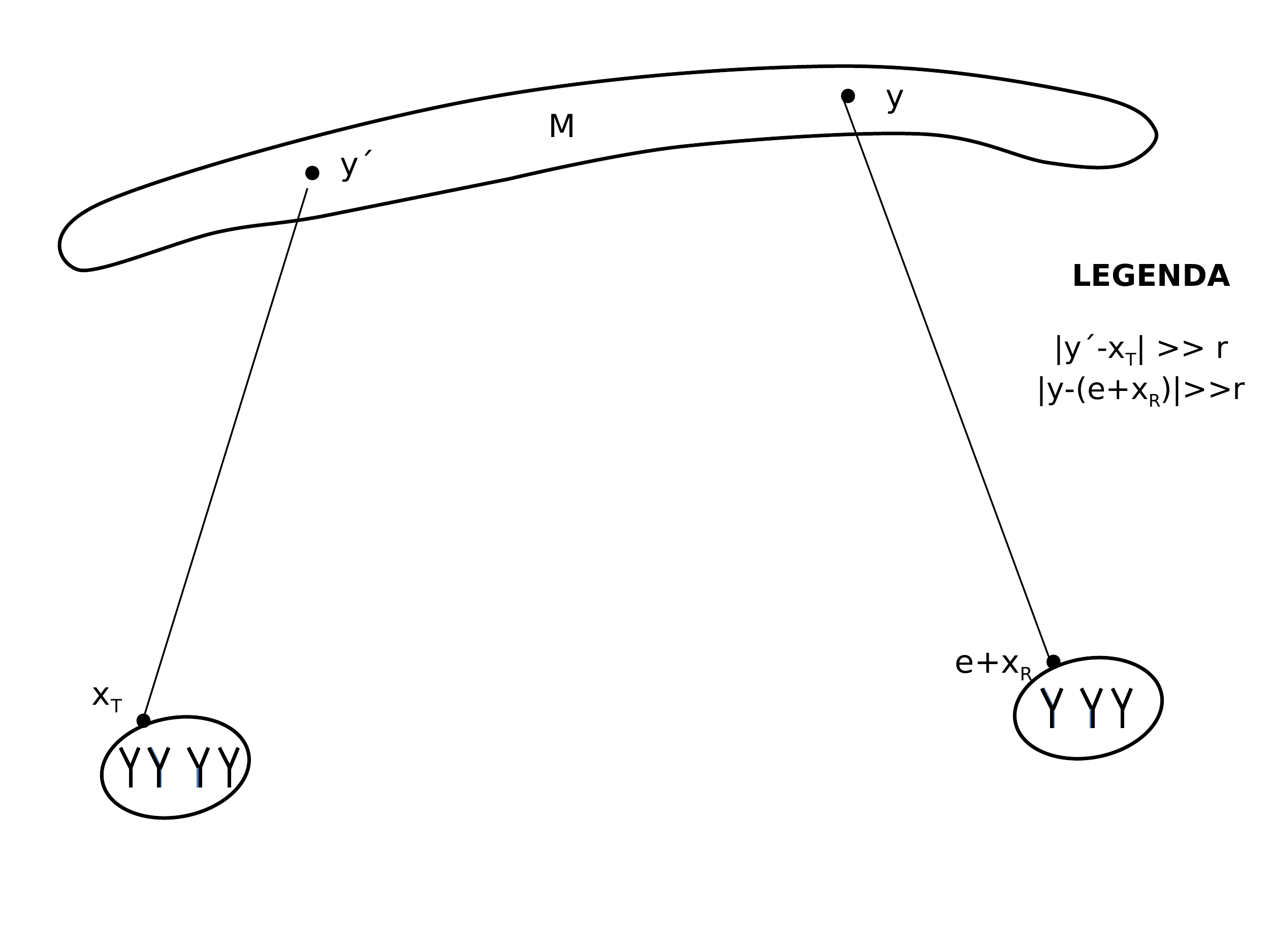} 

\end{center}

Then 
if $\y$ and $\y'$ are in the support of the scatterers  we may write: 
$$|r\hat{\x}_T-\y'|=|\y'|-r\hat{\x}_T\cdot \hat{\y'} +\mathcal{O}(|\y'|^{-1}),\quad 
\hat{\y'}=\frac{\y'}{|\y'|}$$ 
and 
$$|{\bf e}+r\hat{\x}_R-\y|=|\y-{\bf e}|-r\hat{\x}_R\cdot \widehat{(\y-{\bf e})} +
\mathcal{O}(|\y-{\bf e}|^{-1}).$$ 
Therefore
\begin{align}\label{second2}
{\bf G}_{sp}^{\rm free}({\bf e}+r\hat{\x}_R;\y;k_0^2)&=\delta_{sp}
\frac{e^{j k_0 |\y-{\bf e}|}e^{-j k_0 r\hat{\x}_R\cdot
    \widehat{(\y-{\bf e})}}}{4\pi
    |\y-{\bf e}|}+\mathcal{O}(|\y-{\bf e}|^{-2}),\nonumber \\
{\bf G}_{sp}^{\rm free}(\y';r\hat{\x}_T;k_0^2)&=\delta_{sp}
\frac{e^{j k_0 |\y'|}e^{-j k_0 r\hat{\x}_T\cdot \hat{\y'}}}{4\pi
    |\y'|}+\mathcal{O}(|\y'|^{-2}).
\end{align}

Using some elliptic regularity estimates and integration by parts, one can prove the existence of some $3\times 3$ matrices 
${\bf a}_{uv}(\y_u,\y_v;z_0)$ and ${\bf b}_u(\y_u;z_0)$ which are jointly continuous as functions of $\y_u$ and $\y_v$ on the support of $\chi_u$ and 
$\chi_v$ such that:  
\begin{align}\label{second33}
&\{R_0(z_0)\chi_u A_{uv}(z_0)\chi_vR_0(z_0)\}({\bf e}+r\hat{\x}_R,r\hat{\x}_T)\nonumber \\
&=\int_{{\rm supp}(\chi_u)}\int_{{\rm supp}(\chi_v)}
\frac{e^{-j k_0 r\hat{\x}_R\cdot (\widehat {{\bf {y}}_u-{\bf e})}- jk_0r\hat{\x}_T\cdot \hat {{\bf y}}_v}}{16\pi^2
    |{\bf y}_u-{\bf e}|\;|{\bf y}_v|}\left \{{\bf a}_{uv}(\y_u,\y_v;z_0)+{\bf b}(\y_u;z_0)\delta(\y_u-\y_v)\right \}
d\y_ud\y_v \nonumber \\
&+\mathcal{O}(1/D^3_{M}),
\end{align}
where $D_M$ is as in \eqref{atreia10} (i.e. the minimal distance between the emittors/receivers and the scatterers of the environment). The matrices ${\bf a}_{uv}(\cdot,\cdot; z_0)$ and ${\bf b}_{u}(\cdot; z_0)$ contain the full scattering information and depend on all scatterers not just on $u$ and $v$. Note though that the dependence of
$\x$ and $\x'$ is now very explicit. 

Using \eqref{second33} in \eqref{second1} allows us to introduce a 
single $3\times 3$ matrix kernel 
${\bf t}_M(\y,\y';z_0)$ which  contains the full scattering
information 
of the medium:
\begin{align}\label{second4}
&{\bf
  G}^{(M)}({\bf e}+
r\hat{\x}_R,r\hat{\x}_T ;k_0^2)= {\bf G}^{\rm
  free}({\bf e}+
r\hat{\x}_R,r\hat{\x}_T ;k_0^2)\\
&-\int_{{\rm supp}(\chi_M)}\int_{{\rm supp}(\chi_M)}
\frac{e^{-j r k_0
    \hat{\x}_R
\cdot (\widehat {{\bf {y}}-{\bf e})}- 
jr k_0\hat{\x}_T\cdot \hat {{\bf y}'}}}{16\pi^2
    |{\bf y}-{\bf e}|\;|{\bf y}'|}{\bf t}_M(\y,\y';z_0)d{\bf y}d{\bf y}' +\mathcal{O}(1/D^3_{M})\nonumber ,
\end{align}
where $\chi_M$ is the characteristic function of all scatterers, while 
${\bf t}_M(\y,\y';z_0)$ has the structure:
$${\bf t}_M(\y,\y';z_0)=\tau(\y,\y';z_0)+\theta(\y;z_0)\delta(\y-\y')$$
where $\tau(\cdot,\cdot;z_0)$ and  $\theta(\cdot;z_0)$ are continuous on
their domain of definition. These functions play the role of a
scattering kernel, 
and contain the full scattering information of the medium.

When we introduce this formula in \eqref{acinspea6} we obtain two
contributions: one from the "empty" space and another one coming from
the scatterers. The same happens with the transfer matrix.

This scattering generated contribution can be expressed as:
    
\begin{align}\label{februar10}
\mathcal{H}_{mn}^{\rm scatt}\approx r^4 
\int_{S^2}\int_{S^2}d\hat{\x}_Td\hat{\x}_R
\int_{{\rm supp}(\chi_M)}\int_{{\rm supp}(\chi_M)}d{\bf y}d{\bf y}'
\left \langle {\bf
g}_R^{(m)}(\hat{\x}_R) ,\mathcal{M}({\bf y},{\bf y}';\hat{\x}_R;\hat{\x}_T){\bf
g}_T^{(n)}(\hat{\x}_T)\right \rangle
\end{align}
where $\mathcal{M}({\bf y},{\bf y}';\hat{\x}_R;\hat{\x}_T)$ is constructed by
introducing the scattering contribution from \eqref{second4} in
\eqref{acinspea6} which gives:
\begin{align}\label{februar20}
&\mathcal{M}({\bf y},{\bf y}';\hat{\x}_R;\hat{\x}_T):=-\frac{e^{-j r k_0
    \hat{\x}_R
\cdot (\widehat {{\bf y}-{\bf e}})- 
j r k_0\hat{\x}_T\cdot \widehat {{\bf y}'}}}{16\pi^2
    |{\bf y}-{\bf e}|\;|{\bf y}'|}  \nonumber \\
&\times\left (\begin{array}{cc}
{\bf t}_M(\y,\y';z_0) & j  k_0(\hat{\x}_T\cdot \widehat {{\bf y}'}) {\bf t}_M(\y,\y';z_0)\\ 
j k_0
    \hat{\x}_R
\cdot (\widehat {{\bf {y}}-{\bf e}}){\bf t}_M(\y,\y';z_0)
& -k_0^2(\hat{\x}_T\cdot \widehat {{\bf y}'} )(k_0
    \hat{\x}_R
\cdot (\widehat {{\bf {y}}-{\bf e}})){\bf t}_M(\y,\y';z_0)
\end{array}\right ).
\end{align}
We are now ready to formulate the main result of this section:
\begin{corollary} \label{corollary1} The scattering contribution $\mathcal{H}_{mn}^{\rm scatt}$ to the transfer matrix element can be expressed as
\begin{align}\label{februar11}
\mathcal{H}_{mn}^{\rm scatt}= \int\int_{S^2\times
  S^2}d\Omega_R d\Omega_T 
\left \langle {\bf
h}_R^{(m)}(\Omega_R), 
\mathcal{A}(\Omega_R,\Omega_T){\bf
h}_T^{(n)}(\Omega_T)\right \rangle +\mathcal{O}(1/D_M^3),
\end{align}
where ${\bf
h}_T^{(n)}(\Omega_T)$ is a six dimensional vector which can be
interpreted as the signal sent by the transmiter in the direction $\Omega_T$, 
then  $\mathcal{A}(\Omega_R,\Omega_T)$ is a $6\times 6$ spread matrix {\rm only depending on the scatterers}, and finally  
${\bf h}_R^{(m)}(\Omega_R)$ is a six dimensional
vector describing what the receiver got from the direction $\Omega_R$. 
\end{corollary}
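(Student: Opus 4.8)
The plan is to substitute the explicit kernel \eqref{februar20} into \eqref{februar10} and exploit the rank--one structure of its $2\times2$ coefficient block. For fixed scatterer points $\y,\y'$ set $\Omega_R:=\widehat{(\y-{\bf e})}$, $\Omega_T:=\widehat{\y'}$, and $\mathbf{v}_R(\hat{\x}_R,\Omega_R):=(1,\,jk_0\,\hat{\x}_R\cdot\Omega_R)^{t}$, $\mathbf{v}_T(\hat{\x}_T,\Omega_T):=(1,\,jk_0\,\hat{\x}_T\cdot\Omega_T)^{t}$. Then the $2\times2$ block in \eqref{februar20} is $\mathbf{v}_R\mathbf{v}_T^{t}$, so that
\[
\M(\y,\y';\hat{\x}_R;\hat{\x}_T)=-\frac{e^{-jrk_0(\hat{\x}_R\cdot\Omega_R+\hat{\x}_T\cdot\Omega_T)}}{16\pi^2\,|\y-{\bf e}|\,|\y'|}\ (\mathbf{v}_R\mathbf{v}_T^{t})\otimes{\bf t}_M(\y,\y';z_0).
\]
Writing $\mathbf{g}_R^{(m)}=(\mathbf{g}_R^{(m),1},\mathbf{g}_R^{(m),2})$ and $\mathbf{g}_T^{(n)}=(\mathbf{g}_T^{(n),1},\mathbf{g}_T^{(n),2})$ for the two $\mathbb{C}^3$--blocks of the six--vectors of Theorem \ref{theorem1}, the bilinear form in \eqref{februar10} contracts (the polarization block being rank one) to the $\mathbb{C}^3$--pairing
\[
\left\langle\ \mathbf{g}_R^{(m),1}(\hat{\x}_R)+jk_0(\hat{\x}_R\cdot\Omega_R)\,\mathbf{g}_R^{(m),2}(\hat{\x}_R)\ ,\ {\bf t}_M(\y,\y';z_0)\,\big(\mathbf{g}_T^{(n),1}(\hat{\x}_T)+jk_0(\hat{\x}_T\cdot\Omega_T)\,\mathbf{g}_T^{(n),2}(\hat{\x}_T)\big)\right\rangle
\]
times the same scalar. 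The crucial point is that this completely decouples $\hat{\x}_R$ from $\hat{\x}_T$, and that the remaining dependence on $\y$ (resp.\ $\y'$), apart from ${\bf t}_M$ itself, is only through $\Omega_R$ (resp.\ $\Omega_T$) and the scalar prefactor $1/|\y-{\bf e}|$ (resp.\ $1/|\y'|$).

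Next I would carry out the $\hat{\x}_R$-- and $\hat{\x}_T$--integrations inside the bilinear form. Since $\mathbf{g}_R^{(m)}$ and $\mathbf{g}_T^{(n)}$ depend only on their own angular variable (by \eqref{februar8} they are fixed radiation patterns of the receiving, resp.\ transmitting, array, unaffected by the environment), I set
\[
{\bf h}_R^{(m)}(\Omega_R):=\left(\int_{S^2}e^{-jrk_0\hat{\x}_R\cdot\Omega_R}\mathbf{g}_R^{(m),1}(\hat{\x}_R)\,d\hat{\x}_R\ ,\ \int_{S^2}e^{-jrk_0\hat{\x}_R\cdot\Omega_R}(\hat{\x}_R\cdot\Omega_R)\,\mathbf{g}_R^{(m),2}(\hat{\x}_R)\,d\hat{\x}_R\right)\in\mathbb{C}^6,
\]
and ${\bf h}_T^{(n)}(\Omega_T)$ analogously. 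These are six--dimensional, depend only on the respective antenna array and on the direction (and on the fixed $r$ and the frequency), and they are smooth in $\Omega$, being $S^2$--Fourier transforms of bounded densities. With the $3\times6$ matrix $P:=(I_3\ \ jk_0 I_3)$ one has $P\,{\bf h}_R^{(m)}(\Omega_R)=\int_{S^2}e^{-jrk_0\hat{\x}_R\cdot\Omega_R}\big(\mathbf{g}_R^{(m),1}(\hat{\x}_R)+jk_0(\hat{\x}_R\cdot\Omega_R)\mathbf{g}_R^{(m),2}(\hat{\x}_R)\big)\,d\hat{\x}_R$, i.e.\ exactly the $\hat{\x}_R$--integral of the left entry of the collapsed pairing; likewise $P\,{\bf h}_T^{(n)}$ on the right.

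The final step is to fold the $\y,\y'$--integrals onto $S^2\times S^2$. I would pass to spherical coordinates centred at ${\bf e}$ and at the origin, $\y={\bf e}+\rho\,\Omega_R$ and $\y'=\rho'\,\Omega_T$, with Jacobians $\rho^2\,d\rho\,d\Omega_R$ and $\rho'^{\,2}\,d\rho'\,d\Omega_T$; the prefactor $1/(|\y-{\bf e}||\y'|)=1/(\rho\rho')$ leaves the weight $\rho\rho'$. Absorbing the radial integration of the scattering kernel into the $3\times3$ matrix
\[
{\bf a}_M(\Omega_R,\Omega_T):=-\frac{r^4}{16\pi^2}\int\!\!\int\rho\,\rho'\,{\bf t}_M\big({\bf e}+\rho\,\Omega_R,\ \rho'\,\Omega_T;z_0\big)\,d\rho\,d\rho'
\]
and setting $\mathcal{A}(\Omega_R,\Omega_T):=P^{t}\,{\bf a}_M(\Omega_R,\Omega_T)\,P$, a $6\times6$ matrix built from ${\bf a}_M$ alone and hence depending only on the scatterers (and on the fixed ${\bf e},r$ and frequency), one has $\langle{\bf h}_R^{(m)},\mathcal{A}\,{\bf h}_T^{(n)}\rangle=\langle P{\bf h}_R^{(m)},{\bf a}_M\,P{\bf h}_T^{(n)}\rangle$, and \eqref{februar10} reassembles into $\int\!\!\int_{S^2\times S^2}\langle{\bf h}_R^{(m)}(\Omega_R),\mathcal{A}(\Omega_R,\Omega_T){\bf h}_T^{(n)}(\Omega_T)\rangle\,d\Omega_R\,d\Omega_T$, which is \eqref{februar11}. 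The $\mathcal{O}(1/D_M^3)$ remainder is inherited straight from \eqref{second4}: it lacks the factored form, but after the $\hat{\x}$-- and $(\rho,\rho',\Omega)$--integrations (all over compact sets) and pairing with the bounded patterns it contributes only $\mathcal{O}(1/D_M^3)$, using that the remainder in \eqref{second4} obeys the same bound together with its first two derivatives in the observation variables $r\hat{\x}_T$, ${\bf e}+r\hat{\x}_R$ — which is exactly what the elliptic--regularity argument behind \eqref{second4} supplies and what the $\partial_\rho,\partial_{\rho'}$ entries of \eqref{acinspea6} demand.

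The hard part will be the singular component of the scattering kernel: writing ${\bf t}_M(\y,\y';z_0)=\tau(\y,\y';z_0)+\theta(\y;z_0)\,\delta(\y-\y')$, the $\tau$--part produces a jointly continuous ${\bf a}_M$ and causes no difficulty, but in the $\theta\,\delta$--part the change of variables substitutes $\delta\big({\bf e}+\rho\,\Omega_R-\rho'\,\Omega_T\big)$, whose support in the $(\rho,\rho')$--plane is non--empty only when ${\bf e}\in\mathrm{span}(\Omega_R,\Omega_T)$. Hence ${\bf a}_M$, and with it $\mathcal{A}(\Omega_R,\Omega_T)$, is in general not a function but a matrix--valued distribution on $S^2\times S^2$: a diffuse part from $\tau$ plus a part concentrated on the codimension--one "specular" set $\{\,{\bf e}\in\mathrm{span}(\Omega_R,\Omega_T)\,\}$ coming from $\theta\,\delta$. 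So \eqref{februar11} must be read distributionally; the pairing is well defined because ${\bf h}_R^{(m)},{\bf h}_T^{(n)}$ are smooth in $\Omega$, but extracting the precise surface density of the specular term (a transversality/coarea computation around $\{\,{\bf e}\in\mathrm{span}(\Omega_R,\Omega_T)\,\}$) is the one genuinely delicate step; everything else is bookkeeping of the $2\times2$ polarization structure.
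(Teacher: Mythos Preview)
Your proposal is correct and follows essentially the same route as the paper: split the six--vectors ${\bf g}_T^{(n)},{\bf g}_R^{(m)}$ into their two $\mathbb{C}^3$ blocks, perform the $\hat{\x}_T,\hat{\x}_R$ integrals to obtain ${\bf h}_T^{(n)},{\bf h}_R^{(m)}$, pass to spherical coordinates in $\y,\y'$, and absorb the radial integrals of ${\bf t}_M$ into $\mathcal{A}$. The only cosmetic difference is bookkeeping of the $jk_0$ factors --- the paper puts them into the second blocks of ${\bf h}_{T,R}$ and takes $\mathcal{A}$ with all four $3\times3$ blocks equal to the radially integrated ${\bf t}_M$, whereas you keep them in $P$ and set $\mathcal{A}=P^{t}{\bf a}_MP$; the resulting bilinear forms coincide. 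Your explicit use of the rank--one structure of the $2\times2$ coefficient block is a tidy way of phrasing what the paper does by direct expansion in \eqref{februar22}, and your remarks on the $\mathcal{O}(1/D_M^3)$ remainder and on the distributional (specular) component of $\mathcal{A}$ coming from the $\theta(\y)\delta(\y-\y')$ part of ${\bf t}_M$ go beyond what the paper's proof actually spells out.
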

\begin{proof}
Going back to \eqref{februar10} we may assume that:
\begin{align}\label{februar21}
{\bf
g}_T^{(n)}(\hat{\x}_T)=\left (\begin{array}{cc}
{\bf F}_1 (\hat{\x}_T)\\ 
{\bf F}_2 (\hat{\x}_T)
\end{array}\right ), \qquad {\bf
g}_R^{(m)}(\hat{\x}_R)=\left (\begin{array}{cc}
{\bf H}_1 (\hat{\x}_R) &
{\bf H}_2 (\hat{\x}_R)
\end{array}\right )
\end{align}
where the ${\bf F}$'s and ${\bf H}$'s are some three dimensional
vectors. Then we have:
\begin{align}\label{februar22}
&\left \langle {\bf
g}_R^{(m)}(\hat{\x}_R) ,\mathcal{M}({\bf y},{\bf y}';\hat{\x}_R;\hat{\x}_T){\bf
g}_T^{(n)}(\hat{\x}_T)\right \rangle=-\frac{e^{-j r k_0
    \hat{\x}_R
\cdot (\widehat {{\bf {y}}-{\bf e}})- 
j r k_0\hat{\x}_T\cdot \widehat {{\bf y}'}}}{16\pi^2
    |{\bf y}-{\bf e}|\;|{\bf y}'|}\nonumber \\
&\times \{{\bf H}_1 (\hat{\x}_R)\cdot {\bf t}_M({\bf y},{\bf y}';z_0){\bf F}_1
(\hat{\x}_T)+{\bf H}_1 (\hat{\x}_R)\cdot {\bf t}_M({\bf y},{\bf y}';z_0){\bf F}_2
(\hat{\x}_T)j  k_0(\hat{\x}_T\cdot \widehat {{\bf y}'})\nonumber \\ 
&+j k_0
    (\hat{\x}_R
\cdot (\widehat {{\bf {y}}-{\bf e}})){\bf H}_2 (\hat{\x}_R)\cdot {\bf t}_M({\bf y},{\bf y}';z_0){\bf F}_1
(\hat{\x}_T)\nonumber \\
&-k_0^2(\hat{\x}_T\cdot \widehat {{\bf y}'})(\hat{\x}_R
\cdot (\widehat {{\bf {y}}-{\bf e}})){\bf H}_2 (\hat{\x}_R)\cdot {\bf t}_M({\bf y},{\bf y}';z_0){\bf F}_2
(\hat{\x}_T)\}.
\end{align}
Now we integrate with respect to the angles $\hat{\x}_R$ and $\hat{\x}_T$ in \eqref{februar10}. It is useful to define local spherical coordinates near both the transmitters and the receivers. Introduce $\Omega_T:=\widehat {{\bf y}'}$ and $s_T:=|{\bf y}'|$, and 
$\Omega_R:=\widehat {{\bf {y}}-{\bf e}}$ and $s_R:=|{\bf {y}}-{\bf e}|$. Now we can define:
\begin{align}\label{februar23}
{\bf
h}_T^{(n)}(\Omega_T):=\left (\begin{array}{cc}
\int_{S^2}e^{-j r k_0\hat{\x}_T\cdot \Omega_T}{\bf F}_1 (\hat{\x}_T)d\hat{\x}_T\\ 
\int_{S^2}e^{-j r k_0\hat{\x}_T\cdot \Omega_T}j  k_0
(\hat{\x}_T\cdot \Omega_T){\bf F}_2 (\hat{\x}_T)d\hat{\x}_T
\end{array}\right ),
\end{align}
\begin{align}\label{februar24}
{\bf
h}_R^{(m)}(\Omega_R):=\left (\begin{array}{cc}
\int_{S^2}e^{-j r k_0\hat{\x}_R
\cdot \Omega_R)}{\bf H}_1 (\hat{\x}_R)d\hat{\x}_R, &
\int_{S^2}e^{-j r k_0\hat{\x}_R
\cdot \Omega_R)}j  k_0(\hat{\x}_R
\cdot \Omega_R)
{\bf H}_2 (\hat{\x}_R)d\hat{\x}_R
\end{array}\right ),
\end{align}
and 
\begin{align}\label{februar25}
\mathcal{A}(\Omega_R,\Omega_T)&:=-\int_0^\infty ds_R\int_0^\infty ds_T\frac{r^4 s_Rs_T}{16\pi^2} \left (\begin{array}{cc}
{\bf t}_M(s_R,\Omega_R; s_T,\Omega_T;z_0) &  {\bf t}_M(s_R,\Omega_R; s_T,\Omega_T;z_0)\\ 
{\bf t}_M(s_R,\Omega_R; s_T,\Omega_T;z_0)
& {\bf t}_M(s_R,\Omega_R; s_T,\Omega_T;z_0)
\end{array}\right ).
\end{align}
Then \eqref{februar11} follows immediately. Note that the integrand in the double integral defining the spread matrix $\mathcal{A}(\Omega_R,\Omega_T)$ is different from zero only on the compact radial supports of the total scatterer defined  by the intersection of the scatterer with the direction $\Omega_T$ seen by an observer placed at the origin and by the intersection of the scatterer with the direction $\Omega_R$ seen by an observer placed at ${\bf e}$. 

\end{proof}

\section{Conclusions and open problems}

\begin{enumerate}
\item Starting from the Maxwell equations and the Ohm's Law we
  give a rigorous analysis 
of the input-output relationship of a MIMO system as a
  direct, well-posed problem. The main observation is that we can  
decouple the group of transmitters from the group of receivers and 
  scatterers if the linear dimensions of all
  our antennas are small enough, see \eqref{asaptea13} and the discussion
  around it. Even if this smallness condition would not be satisfied,
  we could in theory 
  give higher order corrections with respect to the
  decoupled case. 

 We stress that the transmitting (receiving) antennas are allowed to
 interfere among themselves, so in principle the coupling in between the  
transmitting (receiving) antennas is taken into account. 

\item In the decoupled case, we
  can analyze the transfer matrix and identify in it the spread kernel 
  due to the 
  environment alone. The most important result of our paper is
  contained in the equations \eqref{februar7} and \eqref{februar8};
  there we do not need the scatterers to be located in the far-field
  region of the receiving and transmitting antennas. In the particular
  case in
  which the scatterers are far away, then our formulas simplify and we
  can recover previously known empirical results derived in \cite{F1, Po, H,
    Z,S}; in this case, our results are given in 
\eqref{februar23}-\eqref{februar25}. 

 In a future work we will investigate the
  behavior of the spread kernel as a function of angles and frequency for
  scatterers which are not necessarily far away from the emitting and
  receiving antennas. 

\item  Our formalism does not (yet)
  allow ideal metals ($\sigma=\infty$) or discontinuities in
  $\epsilon$'s. Investigating
  how our formalism behaves when one takes the limit of non-smooth
  coefficients is a very
  interesting and difficult problem of operator theory and functional analysis, 
which will be investigated elsewhere.

\item We have not elaborated on the question of the limiting absorption
  principle formulated in \eqref{aunspea14}; complete proofs based on
  the analytic Fredholm alternative will be
  given elsewhere.
  
 \item We think that our new understanding of the construction of the 
 transfer matrix $\mathcal{H}$ paves the way for the study of the behavior of the capacity in the case the number of antennas grows in a definite volume
 (see \cite{Po}  for a discussion on this subject).
  
\end{enumerate}

\section{Appendix: A Lippmann-Schwinger type equation for the resolvent}

Let us consider an operator $H=H_0 +\sum_{k=1}^NW_k$ where $H_0$ is a
"nice" elliptic second order reference differential operator 
(the typical example is $-\Delta$), and the perturbations 
$W_k$ are first order differential operator with smooth
coefficients. The supports of the coefficients of $W_k$ are disjoint
from those of the coefficients of $W_j$ if $k\neq j$. 
Denote by $\chi_m$ the characteristic function of a ball
completely containing the support of the coefficients of $W_m$. 
Denote by $D_{mn}$ the
distance between the supports of $\chi_m$ and $\chi_n$. 

Denote by $R(z):=(H-z)^{-1}$ and by $R_0(z):=(H_0-z)^{-1}$ whenever
the two inverses (resolvents) exist. 

\subsection{The case of just one scatterer}
Assume $N=1$. Choose $z$ in the resolvent set of $H_0$. If the
imaginary part of $z$ is large enough, then $||W_1(H_0-z)^{-1}||<1$
and $(H-z)^{-1}$ exists as a bounded operator in $(L^2(\R^3))^3$. The second resolvent
identity reads as:
\begin{align}\label{decembrie1}
R(z)=R_0(z)-R_0(z)W_1R(z).
\end{align}
Multiply the above equation with $W_1$ at the left, and write
$$W_1R=W_1R_0-W_1R_0W_1R.$$
Using that $\chi_1W_1=W_1$ we have:
 \begin{align}\label{decembrie2}
 W_1R=({\rm Id}_1+W_1R_0\chi_1)^{-1}W_1R_0,
\end{align}
where ${\rm Id}_1$ is the identity operator in $(L^2({\rm
  supp}(\chi_1)))^3$. The inverse $({\rm Id}_1+W_1R_0\chi_1)^{-1}$ 
(if it exists) is
to be taken in $(L^2({\rm supp}(\chi_1)))^3$. 
We will always assume the existence of this inverse;
generically this is true if $W_1$ is relatively compact to $R_0$ and
one can apply the Fredholm alternative. 

Note that if we know $W_1R$, then we know $R$ everywhere in the space
because we can replace \eqref{decembrie2} in \eqref{decembrie1} and
obtain:
\begin{align}\label{decembrie3}
R(z)&=R_0(z)-R_0(z)\chi_1({\rm
  Id}_1+W_1R_0(z)\chi_1)^{-1}W_1R_0(z)\nonumber \\
&=R_0(z)-R_0(z)T_1(z)R_0(z),\nonumber \\
T_1(z)&:=\chi_1({\rm
  Id}_1+W_1R_0(z)\chi_1)^{-1}W_1.
\end{align}

\subsection{The case of several scatterers}
Let $N\geq 2$. The equivalent of \eqref{decembrie1} reads as:
\begin{align}\label{decembrie4}
R(z)=R_0(z)-\sum_{m=1}^NR_0(z)W_mR(z).
\end{align}
We multiply with $W_n$ at the left on both sides of the above equality
and obtain:
\begin{align}\label{decembrie5}
W_nR(z)=W_nR_0(z)-\sum_{m=1}^NW_nR_0(z)W_mR(z)=\sum_{m=1}^N\left \{\delta_{nm}-W_nR_0(z)\chi_m\right\}W_mR(z).
\end{align}
Denote by $\mathcal{H}_N:=\oplus_{k=1}^N(L^2({\rm
  supp}(\chi_k)))^3$. Define the bounded operator
$\mathcal{M}$ given by
\begin{align}\label{decembrie6}
&\mathcal{H}_N\ni \Psi=\oplus_{k=1}^N\psi_k
\to\mathcal{M}(z)\Psi=\oplus_{i=1}^N
\left\{\sum_{k=1}^N\mathcal{M}_{ik}\psi_k\right \},\nonumber \\
&\mathcal{M}_{ik}(z):=W_iR_0(z)\chi_k.
\end{align}
Define the immersion operator
\begin{align}\label{decembrie6'}
J_N: (L^2(\R^3))^3\mapsto \mathcal{H}_N,\quad (L^2(\R^3))^3\ni \phi\to
J\psi=\oplus_{n=1}^N\chi_n\phi,\nonumber \\
J_N^*:\mathcal{H}_N\mapsto (L^2(\R^3))^3,\quad \mathcal{H}_N\ni\Psi=
\oplus_{n=1}^N\psi_n\mapsto J_N^*\Psi=\sum_{n=1}^N\chi_n\psi_n.
\end{align}

Denoting the total perturbation by $W=\sum_{n=1}^NW_n$, the 
equation \eqref{decembrie5} can be seen as:
\begin{align}\label{decembrie7}
J_NWR(z)=J_NWR_0(z)-\mathcal{M}(z)J_NWR(z).
\end{align}
Denote by ${\rm Id}$ the identity operator in $\mathcal{H}_N$. If 
${\rm Id}+\mathcal{M}(z)$ is invertible in $\mathcal{H}_N$, then the
above equation can be rewritten as:
\begin{align}\label{decembrie8}
J_NWR(z)=\left \{{\rm Id}+\mathcal{M}(z)\right \}^{-1}
J_NWR_0(z).
\end{align}
Moreover, equation \eqref{decembrie4} can be written as:
\begin{align}\label{decembrie9}
R(z)&=R_0(z)-R_0(z)J_N^*J_NWR(z)\nonumber \\
&=R_0(z)-R_0(z)J_N^*\left \{{\rm Id}+\mathcal{M}(z)\right \}^{-1}
J_NWR_0(z)
\end{align}
where in the second line we used \eqref{decembrie8}. It follows that
the full resolvent can be always written as:
\begin{align}\label{decembrie10}
R(z)&=R_0(z)-\sum_{n=1}^N\sum_{m=1}^NR_0(z)\chi_n A_{nm}(z)
\chi_mR_0(z),
\end{align}
where the operators $A_{nm}(z)$ act in $(L^2(\R^3))^3$. 

Now let us assume that the distance $D_{mn}$ between any two different
supports is large. We can split the operator $\mathcal{M}$ in a
diagonal and off-diagonal part:
  \begin{align}\label{decembrie11}
\mathcal{M}_d(z):=\oplus_{n=1}^NW_nR_0(z)\chi_n,\quad 
\mathcal{M}_o(z):=\mathcal{M}(z)-\mathcal{M}_d(z).
\end{align}
Clearly, the norm of $\mathcal{M}_o(z)$ becomes smaller and smaller
when $D_{mn}$ becomes larger. For example, if $H_0=-\Delta$ we have
that $||W_mR_0(z)\chi_n||\leq C/D_{mn}$. Thus if the minimal distance
between any two scatterers becomes larger than a critical value, we
can write:
 \begin{align}\label{decembrie12}
\left \{{\rm Id}+\mathcal{M}(z)\right \}^{-1}=\sum_{i=0}^\infty (-1)^i
\left \{{\rm Id}+\mathcal{M}_d(z)\right \}^{-1}
\left \{\mathcal{M}_o(z)[{\rm Id}+\mathcal{M}_d(z]^{-1} )\right\}^{i}.
\end{align}
The operators $[{\rm Id}+\mathcal{M}_d(z)]^{-1} $ are diagonal
and given by:
\begin{align}\label{decembrie13}
[{\rm Id}+\mathcal{M}_d(z)]^{-1}=\oplus_{n=1}^N[{\rm Id}_n+W_nR_0(z)\chi_n]^{-1}.
\end{align}
Making the analogy with \eqref{decembrie3} we introduce the notation
\begin{align}\label{decembrie14}
T_n(z):=[{\rm Id}_n+W_nR_0(z)\chi_n]^{-1}W_n,
\end{align}
where $T_n$ is a bounded operator in $(L^2({\rm
  supp}(\chi_n)))^3$. Then introducing \eqref{decembrie14} and 
\eqref{decembrie12} in \eqref{decembrie9} we obtain (we drop the $z$
dependence for simplicity): 
\begin{align}\label{decembrie15}
R=R_0-\sum_{n=1}^NR_0T_nR_0-\sum_{i=1}^\infty
(-1)^i\tilde{\sum}_{m_0,m_1,...m_{i}=1}^NR_0T_{m_0}R_0T_{m_1}\cdot
\dots\cdot R_0T_{m_i}R_0,
\end{align}
where the symbol $\tilde{\sum}$ means that the sum is performed on
indices which obey $m_0\neq m_1$, $m_1\neq m_2$, ..., $m_{i-1}\neq
m_i$.  

\section{Acknowledgments}

Part of this
work has been done while F.B. was visiting professor at Aalborg
University. H.C. acknowledges support from the Danish FNU grant 
{\it Mathematical Physics} and from the French Embassy in Copenhagen.

\end{document}